\journal{Journal of Control Engineering Practice}
\newtheorem{theorem}{Theorem}
\newtheorem{lemma}{Lemma}
\newtheorem{definition}{Definition}
\newcommand{\CNMS}{\textbf{CNMS} }
\newcommand{\RCNMS}{\textbf{RCNMS}}
\definecolor{lbl}{RGB}{55, 124, 223}
\newcommand{\hl}[1]{\textcolor{black}{#1}}
\newcommand{\hll}[1]{\textcolor{black}{#1}}
\newcommand{\hlll}[1]{\textcolor{black}{#1}}
\newcommand{\hllll}[1]{\textcolor{black}{#1}}
\newcommand{\hlr}[1]{\textcolor{black}{#1}}
\newcommand{\hlrr}[1]{\textcolor{black}{#1}}
\begin{document}

\begin{frontmatter}

\title{Model Properties for \hl{Efficient Synthesis of}\\ Nonblocking Modular Supervisors}


\author[tue]{Martijn~Goorden\corref{mycorrespondingauthor}}
\cortext[mycorrespondingauthor]{Corresponding author}
\ead{m.a.goorden@tue.nl}

\author[tue]{Joanna~van~de~Mortel-Fronczak}
\ead{j.m.v.d.mortel@tue.nl}

\author[tue]{Michel~Reniers}
\ead{m.a.reniers@tue.nl}

\author[chalmers]{Martin~Fabian}
\ead{fabian@chalmers.se}

\author[vu,tue]{Wan~Fokkink}
\ead{w.j.fokkink@vu.nl}

\author[tue]{Jacobus~Rooda}
\ead{j.e.rooda@tue.nl}

\address[tue]{Department of Mechanical Engineering, Eindhoven University of Technology, Eindhoven, The Netherlands}
\address[chalmers]{Department of Electrical Engineering, Chalmers University of Technology, G\"{o}teborg, Sweden}
\address[vu]{Department of Computer Science, Vrije Universiteit Amsterdam, Amsterdam, The Netherlands}

\begin{abstract}
Supervisory control theory provides means to synthesize supervisors for \hlll{systems with discrete-event behavior} from models of the uncontrolled plant and of the control requirements. 
\hlr{The applicability of supervisory control theory often fails due to a lack of scalability of the algorithms.}
\hlrr{This paper proposes} \hlr{a format for the requirements} and a method \hlr{to ensure that the crucial properties of controllability and nonblockingness directly hold, thus avoiding the most computationally expensive parts of synthesis.} The method consists of creating a control problem dependency graph and verifying whether it is acyclic. \hlr{Vertices of the graph are modular plant components, and edges are derived from the requirements.} In case of a cyclic graph, potential blocking issues can be localized, so that the original control problem can be reduced to only synthesizing supervisors for smaller partial control problems. The \hlr{strength of the} method is illustrated \hlr{on two} case stud\hlr{ies:} a production line and a roadway tunnel. 
\end{abstract}

\begin{keyword}
Supervisory control\sep finite automata\sep directed graph.
\end{keyword}

\end{frontmatter}


\section{Introduction}\label{sec:introduction}

The design of supervisors for \hlll{systems with discrete-event behavior} has become a challenge as \hlll{they comprise growing numbers of} components to control and \hlll{of} functions to fulfill, at the same time \hlll{being subject to} market demands requiring verified safety, decreased costs, and decreased time-to-market. Model-based systems engineering methodologies can help in overcoming these difficulties.

The supervisory control theory of Ramadge-Wonham~\cite{ramadge_supervisory_1987,ramadge_control_1989} provides means to synthesize supervisors from a model of the uncontrolled plant and a model of the control requirements. \hlll{This synthesis step,} \hlr{in which controllable events that lead to violations of requirements are disabled,} guarantees by construction that the closed-loop behavior of the supervisor and the plant adheres to all requirements, is controllable, nonblocking, and maximally permissive. \hlr{In this context, controllable means that only controllable events are disabled, and nonblocking that special (marked) system states remain reachable. That the supervisor is maximally permissive means that it restricts the controlled behavior as little as absolutely necessary to fulfill the other properties.}



\hlll{It has been shown in~\cite{gohari_complexity_2000} that synthesis is NP-hard, which means that any algorithm computing a supervisor will in the worst case have exponential time and memory complexity. Thus, supervisor synthesis is a tough problem.
However, as was also noted by~\cite{gohari_complexity_2000}, by observing real-world problems more closely one could discover instances of supervisory control problems that are computationally easier. No suggestions were included \hllll{in that paper} of what these instances might be or how to find them, though.}

Analyzing \hlll{a number of models of industrial-size applications, including the recently published~\citep{reijnen_supervisory_2017,reijnen_application_2018,reijnen_supervisory_2019,moormann_design_2020}}, one discovers that the synthesized supervisors do not impose \hlll{any} additional restrictions on the system, i.e., the provided set of requirement models is \hlll{already} sufficient to control the plant such that the closed-loop behavior is controllable, nonblocking, and maximally permissive. \hlr{That is, requirements do not disable uncontrollable events and do not violate nonblocking.} Therefore, time and computing resources could have been saved, as \hlll{the synthesis step} turned out to be unnecessary for these cases
. 

\hlll{When developing the kind of large infrastructural systems \hllll{considered in e.g.}~\cite{reijnen_supervisory_2017,reijnen_supervisory_2019,moormann_design_2020}, due to their safety-criticality, engineers \hllll{tend to} follow a strictly structured development process based on failure mode analysis~\citep{stamatis_failure_2003}. This process includes} \hlr{decomposing the system into components, resulting in a modular plant where no events are shared between components. Furthermore, }  \hlll{requirement specifications \hlr{are formulated} in a specific way, where an actuator action is guarded by requirements on the states of other components, typically sensors but also other actuators. So the requirements models are formulated as state-based expressions~\citep{ma_nonblocking_2005,markovski_coordination_2010}
.}
\hllll{As shown in this paper,} \hlll{this specific formulation is beneficial for supervisor synthesis}\hlr{:} \hlll{it can be determined beforehand that synthesis will not impose additional restrictions on the system. Thus, engineers of safety-critical systems working with the specific formulation obtain a powerful tool to gain confidence in the obtained supervisor.
This is a crucial aspect for the supervisory control theory, which after 30 years of academic research with impressive results, still \hllll{has not gained} wide-spread industrial acceptance.}

\hlll{This paper \hlrr{takes} a different approach compared to an often used methodology} in supervisory control synthesis, where particular structures of systems are used to ease synthesis and \hll{which are} applicable to any given discrete-event system model. Examples of such methods include local modular synthesis~\citep{queiroz_modular_2000-1}, incremental synthesis~\citep{brandin_incremental_2004}, compositional synthesis~\citep{mohajerani_framework_2014}, and coordination control~\citep{komenda_coordination_2014}. \hl{Experimenting with applying several of these synthesis methods directly on the full models of~\cite{reijnen_supervisory_2017,moormann_design_2020} shows that applying the wrong algorithm is fatal in the sense of running out of memory. Thus, knowing beforehand that synthesis is not necessary will save computational time and effort.}

\hlll{This paper brings three contributions, all aimed at easing the synthesis effort necessary for large-scale industrial supervisory control problems. First, it is shown that if \hlr{the plant is formulated modularly and} the requirements specifications are expressed in a way standard to some engineering practices,
and in addition follow some simple and easily checked rules called \CNMS (laid out in detail in Section~\ref{sect:properties}), then \emph{the synthesis step is altogether unnecessary}: the plant and the requirement models already form a controllable, nonblocking, and maximally permissive supervisor.
Knowing this beforehand is a huge benefit, as for such large-scale models state-of-the-art synthesis algorithms may take a long time, or may even not be able to synthesize a supervisor at all due to memory or time constraints.
}

\hlll{However, the \CNMS rules are rather conservative. Examples are known that do not fulfill the rules, but for which synthesis is still not necessary. Thus, the second contribution relaxes \CNMS into \RCNMS. Dependencies between the \hlr{modular} plant models based on the requirements \hlr{are captured} \hlr{with a \emph{dependency graph}} (which does not need enumerating the actual state-space), \hlr{where vertices relate to the plant models and the edges to the requirement models}. \hlr{By analyzing a dependency graph,} it can be determined \hlr{for \RCNMS\ rules} whether synthesis will be necessary or not. If \hlr{there are} no cycles, then the synthesis step is unnecessary: the plant and the requirement models already form a controllable, nonblocking, and maximally permissive supervisor. Note that, instead of using the more common suggestions found in the SCT literature to analyze the dependencies between plant models (e.g., shared events in~\cite{flordal_compositional_2009,komenda_multilevel_2013}), the dependencies within the combined set of plant models and the requirement models \hlrr{is analyzed}, as also suggested in~\cite{feng_computationally_2006,goorden_structuring_2020}}.

\hlll{In general, the dependency graph has cyclic parts, \hlr{though}. The third contribution of this paper shows that synthesis can then be sectionalized, so as to be performed only for those plant and requirement models that are involved in the strongly connected components of those cycles. The other parts need no synthesis for the same reason as above. \hlr{This results in modular supervisors, where a collection of supervisors controls the plant together.} This \hlr{contribution} reduces the synthesis effort significantly, such that supervisors could now be synthesized for control problems where state-of-the-art synthesis algorithms fail (as demonstrated in Section~\ref{sect:tunnel}).
}

\hll{The proposed method is most effective for plant models that are loosely coupled, often the result of using the input/output perspective}~\hlll{\citep{balemi_control_1992}.} \hll{Several case studies with real-life applications, such as~\cite{reijnen_supervisory_2017,reijnen_application_2018,reijnen_supervisory_2019,moormann_design_2020}, have loosely coupled plant models,} \hlll{and experiments show that these models benefit greatly from the described approach.}
\hlr{Even} \hlll{though some well-known examples~\citep{sanden_modular_2015,su_aggregative_2010,wonham_supervisory_2018}} \hllll{are hard to fit} \hlll{in the presented framework,}
\hl{the key point is that \emph{if} a system is modeled with the specific formulation described in this paper, \hlll{\emph{then}} a reduction in synthesis effort can be achieved.}

This paper builds upon preliminary results published in~\cite{goorden_no_2019}. While the \CNMS model properties proposed in that paper capture the essence of some models of industrial applications, this paper \hlrr{generalizes} \hlll{the approach by} providing relaxed conditions \hlr{to determine that} \hlll{synthesis is not necessary}. 

\hlr{Related work is}~\cite{feng_computationally_2006}, where \hl{inspiration is taken from systems with shared resources, such as flexible manufacturing systems. In~\cite{feng_computationally_2006},} control-flow nets are introduced to analyze dependencies in the system and subsequently abstract away those parts of the system that will not contribute to a potential blocking issue. Control-flow nets are defined for shuffle systems with server and buffer specifications, which limits \hlr{their} applicability. In \hlrr{this paper}, \hl{a notion similar to a shuffle system for the plant} \hlr{is used}, while the specifications \hll{are \hlr{in terms of} state-based \hlr{expressions}, see~\cite{ma_nonblocking_2005,markovski_coordination_2010}}. Nevertheless, both works \hlr{complement each other as they identify} different classes of discrete-event systems for which synthesis is easy.


The structure of this paper is as follows. In Section~\ref{sect:preliminaries}, the preliminaries are provided. The \CNMS properties as proposed in previous work~\citep{goorden_no_2019} are presented in Section~\ref{sect:properties}, \hlll{and it is shown that \hlr{for models satisfying these} properties, synthesis is unnecessary}. Section~\ref{sect:graphanalysis} introduces the dependency graph used to analyze the control problems. In Section~\ref{sect:acyclicgraphs}, the result is established that synthesis \hlll{is unnecessary} when the dependency graph is acyclic. Section~\ref{sect:cyclicgraphs} extends the analysis to cyclic dependency graphs to reduce the original control problem into a set of smaller control problems. Sections~\ref{sect:casestudy} and~\ref{sect:tunnel} provide two case studies, related to a production line and to a roadway tunnel, to demonstrate the proposed analysis method. Section~\ref{sect:conclusion} concludes the paper.

\section{Preliminaries}\label{sect:preliminaries}
This section provides a brief summary of concepts related to automata, supervisory control theory, and directed graphs relevant for this paper. The concepts related to automata and supervisory control theory are taken from~\cite{cassandras_introduction_2008,wonham_supervisory_2018}. The concepts related to directed graphs are taken from~\cite{diestel_graph_2017}.

\subsection{Automata}
An automaton is a five-tuple $G=(Q, \Sigma,\delta, q_0,Q_m)$, where $Q$ is the (finite) state set, $\Sigma$ is the alphabet of events, $\delta:Q\times \Sigma \rightarrow Q$ the partial transition function, $q_0 \in Q$ the initial state, and $Q_m \subseteq Q$ the set of marked states.
The alphabet $\Sigma = \Sigma_c \cup \Sigma_u$ is partitioned into two disjoint sets containing the controllable events ($\Sigma_c$) and the uncontrollable events ($\Sigma_u$), and $\Sigma^*$ is the set of all finite strings of events in $\Sigma$, including empty string $\varepsilon$.

\hlrr{The notation} $\delta(q,\sigma)!$ \hlrr{denotes} that there exists a transition from state $q\in Q$ labeled with event $\sigma$, i.e., $\delta(q,\sigma)$ is defined. The transition function can be extended in a natural way to strings as \hlrr{$\delta(q,\varepsilon) = q$ for the empty string $\varepsilon$,}  $\delta(q,s\sigma) = \delta(\delta(q,s),\sigma)$ where $s\in\Sigma^*$, $\sigma\in\Sigma$, and $\delta(q,s\sigma)!$ if $\delta(q,s)!\wedge\delta(\delta(q,s),\sigma)!$. The language generated by the automaton $G$ is $\mathcal{L}(G) = \{s\in\Sigma^*\ |\ \delta(q_0,s)! \}$ and the language marked by the automaton $G$ is $\mathcal{L}_m(G)=\{s\in\hlr{\mathcal{L}(G)}\ |\ \delta(q_0,s)\in Q_m\}$.

A path $p$ of an automaton is defined as a sequence of alternating states and events, i.e., $q_1\sigma_1q_2\sigma_2\ldots \sigma_{n-1} q_n\sigma_nq_{n+1}$ \hlr{s.t.} $\forall i \in [1, n], \delta(q_i,\sigma_i)=q_{i+1}$. A path can also be written in the infix notation $q_1 \xrightarrow{\sigma_1} q_2 \xrightarrow{\sigma_2} \ldots \xrightarrow{\sigma_{n-1}} q_n \xrightarrow{\sigma_n} q_{n+1}$.

A state $q$ of an automaton is called \emph{reachable} if there is a string $s\in\Sigma^*$ with $\delta(q_0,s)!$ and $\delta(q_0,s) = q$. The automaton $G$ is called \emph{reachable} if every state $q\in Q$ is reachable. A state $q$ is \emph{coreachable} if there is a string $s\in\Sigma^*$ with $\delta(q,s)!$ and $\delta(q,s)\in Q_m$. The automaton $G$ is called \emph{coreachable} if every state $q\in Q$ is coreachable. An automaton is called \emph{nonblocking} if every reachable state is coreachable. An automaton is called \emph{trim} if it is reachable and coreachable. Notice that a trim automaton is nonblocking, but a nonblocking automaton may not be trim, since it may have unreachable states.

An automaton is called \emph{strongly connected} if from every state all other states can be reached, i.e., $\hlr{\forall} q_1,q_2 \in Q, \hlr{\exists} s\in\Sigma^*$ \hlr{s.t.} $\delta(q_1,s) = q_2$, see~\cite{ito_representation_1978}.

Two automata can be combined by synchronous composition.
\begin{definition}\label{def:synchronouscomposition}
Let $G_1 = (Q_1,\Sigma_1,\delta_1,q_{0,1},Q_{m,1})$, $G_2 = (Q_2,\Sigma_2,\delta_2,q_{0,2},Q_{m,2})$ be two automata. The synchronous composition of $G_1$ and $G_2$ is defined as
\begin{align*}
G_1\parallel G_2 = (&Q_1\times Q_2,\Sigma_1\cup\Sigma_2,\delta_{1\parallel 2},(q_{0,1},q_{0,2}),\\
 &Q_{m,1}\times Q_{m,2})
\end{align*}
where
\begin{align*}
  \delta_{1\parallel 2}&((x_1,x_2),\sigma) = \\
  &
  \begin{cases}
    (\delta_1(x_1,\sigma),\delta_2(x_2,\sigma)) &\text{if } \sigma\in\Sigma_1\cap\Sigma_2, \delta_1(x_1,\sigma)!,\\ &\text{and } \delta_2(x_2,\sigma)! \\
    (\delta_1(x_1,\sigma),x_2) &\text{if } \sigma\in\Sigma_1\setminus\Sigma_2 \text{ and } \delta_1(x_1,\sigma)! \\
    (x_1,\delta_2(x_2,\sigma)) &\text{if } \sigma\in\Sigma_2\setminus\Sigma_1 \text{ and } \delta_2(x_2,\sigma)!\\
    \text{undefined} &\text{otherwise.}
  \end{cases}
\end{align*}
\end{definition}
Synchronous composition is associative and commutative up to reordering of the state components in the composed state set.

A composed system $\mathcal{G}$ is a collection of automata, i.e., $\mathcal{G} = \{G_1,\ldots,G_m\}$. The synchronous composition of a composed system $\mathcal{G}$, denoted by $\parallel \mathcal{G}$, is defined as $\parallel \mathcal{G} = G_1 \parallel \ldots \parallel G_m$, and the synchronous composition of two composed systems $\mathcal{G}_1 \parallel \mathcal{G}_2$ is defined as $\parallel (\mathcal{G}_1 \cup \mathcal{G}_2)$. A composed system $\mathcal{G} = \{G_1,\ldots,G_m\}$ is called a \emph{product system} if the alphabets of the automata are pairwise disjoint, i.e., $\hlr{\forall} i,j\in [1,m], i\neq j,\Sigma_i\cap\Sigma_j=\emptyset$~\citep{ramadge_control_1989}.

Finally, let $G$ and $K$ be two automata with the same alphabet $\Sigma$. $K$ is said to be \emph{controllable} with respect to $G$ if, for every string $s\in\Sigma^*$ and $u\in\Sigma_u$ such that $\delta_K(q_{0,K},s)!$ and $\delta_G(q_{0,G},su)!$, it holds that $\delta_K(q_{0,K},su)!$.

\subsection{Supervisory control theory}\label{subsect:SCT}
The objective of supervisory control theory~\citep{ramadge_supervisory_1987,ramadge_control_1989,cassandras_introduction_2008,wonham_supervisory_2018} is to design an automaton called a \emph{supervisor} which function is to dynamically disable controllable events so that the closed-loop system of the plant and the supervisor obeys some specified behavior. More formally, given a plant model $P$ and requirement model $R$, the goal is to synthesize supervisor $S$ that adheres to the following control objectives.
\begin{itemize}
    \item \emph{Safety}: all possible behavior of the closed-loop system $P\parallel S$ should always satisfy the imposed requirements, i.e., $\mathcal{L}(P\parallel S) \subseteq \mathcal{L}(P\parallel R)$.
    \item \emph{Controllability}: uncontrollable events may never be disabled by the supervisor, i.e., $P\parallel S$ is controllable with respect to $P$.
    \item \emph{Nonblockingness}: the closed-loop system should be able to reach a marked state from every reachable state, i.e., $P\parallel S$ is nonblocking.
    \item \emph{Maximal permissiveness}: the supervisor does not restrict more behavior than strictly necessary to enforce safety, controllability, and nonblockingness, i.e., for all other supervisors $S'$ it holds that $\mathcal{L}(P\parallel S')\subseteq\mathcal{L}(P\parallel S)$.
\end{itemize}

\emph{Monolithic supervisory control synthesis} results in a single supervisor $S$ from a single plant model and a single requirement model~\citep{ramadge_supervisory_1987}. There may exist multiple automata representations of the safe, controllable, nonblocking, and maximally permissive supervisor. Without loss of generality it is assumed that $S = P\parallel S$. When the plant model and the requirement model are given as a composed system $\mathcal{P}$ and $\mathcal{R}$, respectively, the monolithic plant model $P$ and requirement model $R$ are obtained by performing the synchronous composition of the models in the respective composed system.

For the purpose of supervisor synthesis, requirements can be modeled with automata and \emph{state-based expressions}~\citep{ma_nonblocking_2005,markovski_coordination_2010}. The latter is useful in practice, as \hl{some control} engineers tend to formulate requirements based on states of the plant. To refer to states of the plant, the notation $P.q$ \hlrr{is introduced} that refers to state $q$ of plant $P$. State references can be combined with the Boolean literals $\textbf{T}$ and \textbf{F} and logic connectives to create \emph{predicates}.

In this paper, state-event invariant expressions are considered. A \emph{state-event invariant expression} formulates conditions on the enablement of an event based on states of the plant, i.e., the condition should evaluate to true for the event to be enabled. A state-event invariant expression is of the form $\sigma \textbf{ needs } C$ where $\sigma$ is an event and $C$ a predicate stating the condition. \hl{In general,} event $\sigma$ can be a controllable or an uncontrollable event. Let $R$ be a state-event invariant expression, then $\mathit{event}(R)$ returns the event used in $R$ and $\mathit{cond}(R)$ returns the condition predicate. The synchronous composition of a plant $P$ with a state-event invariant expression $R$, denoted with $P\parallel R$, is defined by altering the transition function $\delta$.
\begin{definition}\label{def:syncompreq}
Let $P=(Q, \Sigma,\delta, q_0,Q_m)$ and $R = \mu \textbf{ needs } C$. Then the synchronous composition of $P$ and $R$ is defined as
\begin{equation*}
P\parallel R = (Q, \Sigma,\delta', q_0,Q_m)
\end{equation*}
where $\delta'(q,\sigma) = \delta(q,\sigma)$, unless $\sigma=\mu$ and $C_{| P.q} = \textbf{F}$, where $C_{| P.q}$ indicates that all state references $P.q$ in $C$ are substituted by $\textbf{T}$ and all state references $P.r, r\in Q,r\neq q$ in $C$ replaced by $\textbf{F}$. In the latter case $\delta'(q,\sigma)$ is undefined.
\end{definition}

\begin{figure}
	\begin{center}
		\begin{tikzpicture}[->,>=stealth',shorten >=1pt,main node/.style={circle,draw,font=\small}, node distance=1.5cm]
			\node[main node, label={below:$q_1$}, initial, initial text={$P_1$},initial where=left, accepting] (l1) {};
			\node[main node, label={below:$q_2$}]  (l2) [right = of l1] {};
			\node[main node, label={below:$q_3$}, accepting] (l3) [right = of l2] {};
			
			\path[every node/.style={font=\sffamily\small}, bend left=15]
			(l1) edge node [above] {$a$} (l2)
			(l2) edge node [below] {$b$} (l1)
			(l2) edge node [above] {$a$} (l3)
			(l3) edge node [below] {$b$} (l2)
			;
			
			\node (r1) [right = .5cm of l3] {$R_1: b \textbf{ needs } P_1.q_3$};
			
			\node[main node, label={below:$q_1$}, initial, initial text={$P_1\parallel R_1$},initial where=left, accepting] (l11) [below = of l2] {};
			\node[main node, label={below:$q_2$}]  (l21) [right = of l11] {};
			\node[main node, label={below:$q_3$}, accepting] (l31) [right = of l21] {};
			
			\path[every node/.style={font=\sffamily\small}, bend left=15]
			(l11) edge node [above] {$a$} (l21)
			(l21) edge node [above] {$a$} (l31)
			(l31) edge node [below] {$b$} (l21)
			;
			
		\end{tikzpicture}
	\end{center}
	\caption{An example of the synchronous composition of an automaton and a state-event invariant expression. In this and subsequent figures, (marked) locations are depicted with (concentric) circles, the initial location with an incoming arrow, and transitions with labeled edges.}
	\label{fig:exampleStateEvent}
\end{figure}
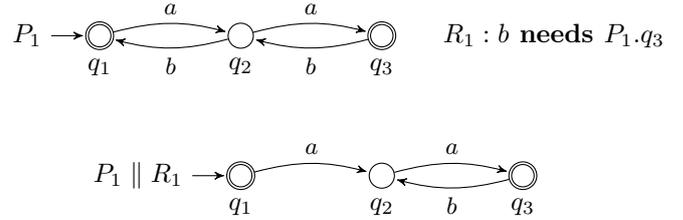

An example to illustrate the synchronous composition between an automaton and a state-event invariant expression is provided in Figure~\ref{fig:exampleStateEvent}.
This definition can be easily extended to a set of state-event invariant expressions $\mathcal{R} = \{R_1,\ldots, R_n\}$.

Given a composed system representation of the plant $\mathcal{P} = \{P_1,\ldots,P_m\}$ and a collection of requirements $\mathcal{R}=\{R_1,\ldots,R_n\}$, the tuple $(\mathcal{P},\mathcal{R})$ \hlrr{is defined} as the \emph{control problem} for which \hlrr{a supervisor needs to be synthesized}. The following assumptions \hlrr{are made} about this control problem:
\begin{itemize}
    \item $\mathcal{P} \neq \emptyset$, while $\mathcal{R}$ can be the empty set.
    \item For all $P\in \mathcal{P}$, it holds that $P$ is an automaton where $Q_{P}$ and $\Sigma_{P}$ are nonempty.
    \item For all $R\in \mathcal{R}$, it holds that
    \begin{itemize}
        \item if $R$ is an automaton, then $Q_{R}$ and $\Sigma_{R}$ are nonempty, and $\Sigma_{R}\subseteq \Sigma_{\mathcal{P}}$ where $\Sigma_{\mathcal{P}} = \bigcup_{P\in\mathcal{P}} \Sigma_{P}$,
        \item if $R$ is a state-event invariant expression, then $\mathit{event}(R) \in \Sigma_{\mathcal{P}}$, and for each state reference $P.q$ in $\mathit{cond}(R)$ it holds that $P\in \mathcal{P}$ and $q\in Q_{P}$.
    \end{itemize}
\end{itemize}

\emph{Modular supervisory control synthesis} uses the fact that often the desired behavior is specified with a collection of requirements $\mathcal{R}$~\citep{wonham_modular_1988}. Instead of first transforming the collection of requirements into a single requirement, as monolithic synthesis does, modular synthesis calculates for each requirement a supervisor based on the plant model. In other words, given a control problem $(\mathcal{P},\mathcal{R})$ with $\mathcal{R}=\{R_1,\ldots,R_n\}$, modular synthesis solves $n$ control problems $(\mathcal{P},\{R_1\}),\ldots, (\mathcal{P},\{R_n\})$. Each control problem $(\mathcal{P},\{R_i\})$ for $i\in[1,n]$ results in a safe, controllable, nonblocking, and maximally permissive supervisor $S_i$. The collection of supervisors $\mathcal{S}=\{S_1,\ldots,S_n\}$ can be conflicting, i.e., $S_1\parallel\ldots\parallel S_n$ can be blocking. A nonconflicting check can verify whether $\mathcal{S}$ is nonconflicting~\citep{mohajerani_framework_2016}. In the case that $\mathcal{S}$ is nonconflicting, $\mathcal{S}$ is also safe, controllable, nonblocking, and maximally permissive for the original control problem $(\mathcal{P},\mathcal{R})$. In the case that $\mathcal{S}$ is conflicting, an additional coordinator $C$ can be synthesized such that $\mathcal{S}\cup\{C\}$ is safe, controllable, nonblocking, and maximally permissive for the original control problem $(\mathcal{P},\mathcal{R})$~\citep{su_synthesize_2009}.

\subsection{Directed graphs}
Definitions and notations of directed graphs are taken from~\cite{diestel_graph_2017}. A \emph{directed graph} is a tuple $(V,E)$ of sets of vertices $V$ (or nodes) and edges $E$ (or arcs), together with two functions $\mbox{init}: E\rightarrow V$ and $\mbox{ter}:E\rightarrow V$. The function $\mbox{init}$ assigns to each edge $e$ an initial vertex $\mbox{init}(e)$ and the function $\mbox{ter}$ assigns to each edge $e$ a terminal vertex $\mbox{ter}(e)$. An edge $e$ is said to be directed from vertex $\mbox{init}(e)$ to vertex $\mbox{ter}(e)$. If $\mbox{init}(e)=\mbox{ter}(e)$, the edge $e$ is called a loop. A directed graph is called \emph{\hlll{self}-loop free} if no edge is a loop. A directed graph $G'=(V',E')$ is a subgraph of $G=(V,E)$, \hlr{denoted} by $G'\subseteq G$, if $V'\subseteq V$ and $E'\subseteq E$.

A \emph{path} in directed graph $G=(V,E)$ is a sequence of its vertices $p=x_0x_1\ldots x_k, k \geq 0$ such that for each step $i\in[0,k-1]$ there exists an edge $e_i\in E$ with $\mbox{init}(e_i) = x_i$ and $\mbox{ter}(e_i) = x_{i+1}$. The path $p$ is also called a path from $x_0$ to $x_k$. Two paths $p_1=x_0\ldots x_k$ and $p_2=y_0\ldots y_l$ can be concatenated into path $p_1p_2=x_0\ldots x_k\ldots y_l$ if $x_k = y_0$. A \emph{cycle} is a path $c = x_0\ldots x_kx_0$ with $k\geq 1$, i.e., a cycle is a path from $x_0$ to itself with at least one other vertex along the path (a loop is not considered to be a cycle). A directed graph is called \emph{cyclic} if it contains a cycle, otherwise it is called \emph{acyclic}.

A directed graph is called \emph{strongly connected} if there is a path between each pair of vertices. A \emph{strongly connected component} of a directed graph is a maximal strongly connected subgraph.

\section{\hlr{Skipping Synthesis for Models with \CNMS Properties}}\label{sect:properties}
This section \hlrr{describes first} \hlrr{some} characteristics of several applications where synthesis does not add any restrictions besides those implied by the requirements. Then, \hlrr{it provides} properties that guarantee controllable, nonblocking, and maximally permissive supervisors that are together nonconflicting.

\subsection{Characteristics of models}
First, as the supervisors synthesized for the applications presented in~\cite{reijnen_supervisory_2017,reijnen_application_2018,reijnen_supervisory_2019} are intended to be implemented on control hardware, the input-output perspective of~\cite{balemi_control_1992} is used. This entails that each sensor is modeled by uncontrollable events, while actuators are modeled by controllable events. Each event represents a change of the state of such a component. This modeling paradigm results in a collection of numerous small plant models that do not share any events. Therefore, the plant model is a product system.

In the rest of this paper, an automaton \hlrr{is called} a \emph{sensor automaton} if its alphabet has only uncontrollable events, i.e., $\Sigma=\Sigma_u$, and an \emph{actuator automaton} if its alphabet has only controllable events, i.e., $\Sigma=\Sigma_c$.

Second, both sensors and actuators have cyclic behavior, often resulting in a trim and strongly connected plant model. For example, all sensors and actuators are modeled in this way in the production line in~\cite{reijnen_application_2018}. Furthermore, unreachable states in an uncontrolled plant represent states that are impossible to reach and are often not modeled or removed from the model.

Finally, requirements for applications often originate from safety risk analysis~\citep{modarres_risk_2016} and failure mode and effect analysis~\citep{stamatis_failure_2003}. States are identified in which some actuator actions would result in unsafe behavior. For example, the safety specifications of a waterway lock that need to be fulfilled by the supervisor are mentioned in Section 4.191 of~\cite{LBSv3}. Each of the 16 requirements given over there describes a state of the system and the disablement of certain actuator actions for that state. It is shown in~\cite{reijnen_supervisory_2017} that these textual specifications can be described with state-event invariant expressions.

\subsection{Properties}

The following properties together guarantee that the control problem itself is a modular globally controllable and nonblocking system.

\begin{definition}[\textbf{CNMS}]
A control problem $(\mathcal{P},\mathcal{R})$ satisfies \textbf{CNMS} (Controllable and Nonblocking Modular Supervisors properties) if it has the following properties:
\begin{enumerate}[label=\arabic*.]
    \item $\mathcal{P}$ is a product system.
    \item For all $P\in \mathcal{P}$ holds that $P$ is a strongly connected automaton with at least one marked state.
    \item For all $R\in \mathcal{R}$ holds
    \begin{enumerate}[label=\alph*.]
        \item $R$ is a state-event invariant expression $\mu \textbf{ needs } C$.
        \item $\mu\in \Sigma_c$.
        \item There exists no other requirement for this event $\mu$.
        \item $C$ is in a disjunctive normal form (see~\cite{davey_introduction_1990}) where each atomic proposition (or variable) is of the form $P.q$ with $P\in \mathcal{P}$.
        \item Each conjunction contains at most one reference to each $P\in \mathcal{P}$.
        \item When $P\in \mathcal{P}$ only has a single state, the literal $\neg P.q$ is not allowed in $C$.
        \item Each $P\in \mathcal{P}$ mentioned in $C$ is a sensor \hl{automaton}.
    \end{enumerate}
\end{enumerate}
\end{definition}

The intuition behind \hlr{the fact that} a system satisfying \textbf{CNMS} is controllable and nonblocking is as follows. Properties 1 and 2 ensure that the plant is already nonblocking in the open loop setting, i.e. without controller, and exhibits cyclic behavior. Furthermore, they ensure that individual plant models behave independently of the other plant models, i.e. an individual plant model can take a transition while the state of each of the other plant models remains the same.

Requirements satisfying Property 3 will not introduce blocking or controllability issues. There is no controllability issue, as there may not exist a requirement restricting the enablement of uncontrollable events. The reason why the controlled system is still nonblocking can be explained as follows. First, a sensor automaton can always go to a marked state with Properties 1, 2 and 3.b. For a plant automaton with one or more controllable events, \hlrr{it is known} from Properties 1 and 2 that from each state there exists a path to a marked state. For any controllable event along the path that is being restricted by a requirement, the condition of that requirement needs to be satisfied for the enablement of the controllable event. As only states of sensor automata are used in a condition and sensor automata can always reach each state without affecting other plant models, there exists a path in the sensor automata to satisfy the condition and subsequently enable the controllable event. By repeating the process of locally changing states in sensor automata, non-sensor automata can reach marked states if the requirements act as the supervisor.

The following theorem states that for a control problem satisfying \textbf{CNMS} synthesis can be skipped, i.e., the plant models and requirement models together already form controllable and nonblocking modular supervisors. In that case, the modular supervisor represented by the plant models and requirement models is by definition also maximally permissive. \hll{The proof of this theorem can be found in \ref{app:proofnosynthesis}.}

\begin{theorem}[\textbf{CNMS}~\citep{goorden_no_2019}]\label{thm:nosynthesis}
Let $(\mathcal{P},\mathcal{R})$ be a control problem satisfying \textbf{CNMS}. Then no supervisor synthesis is necessary, i.e., $\mathcal{P}\parallel \mathcal{R}$ is controllable and nonblocking, \hlll{hence also maximally permissive}.
\end{theorem}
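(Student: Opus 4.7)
The plan is to prove controllability and nonblockingness separately; maximal permissiveness then follows trivially, because if $\mathcal{P}\parallel\mathcal{R}$ itself is controllable and nonblocking (and trivially safe with respect to itself), then any supervisor $S'$ satisfying the safety constraint already has $\mathcal{L}(\mathcal{P}\parallel S')\subseteq\mathcal{L}(\mathcal{P}\parallel\mathcal{R})$. Controllability is immediate from Property 3.b: every requirement is of the form $\mu \textbf{ needs } C$ with $\mu\in\Sigma_c$, so by Definition~\ref{def:syncompreq} only controllable transitions can ever be removed. Hence no uncontrollable event is ever disabled in $\mathcal{P}\parallel\mathcal{R}$, giving controllability with respect to $\mathcal{P}$.

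The core of the proof is nonblockingness. I would show that any reachable state $\mathbf{q}=(q_1,\ldots,q_m)$ of $\mathcal{P}\parallel\mathcal{R}$ coreaches a marked state by processing components one at a time, exploiting the product-system structure (Property 1): transitions of $P_i$ leave the state component of every other $P_j$ untouched. First, for each non-sensor plant $P_i$, Property 2 yields a path in $P_i$ from $q_i$ to some marked state $q_{m,i}$. I would walk along this path event by event; uncontrollable events are never blocked (3.b), and for each controllable event $\sigma$ with a (by 3.c unique) requirement $\sigma \textbf{ needs } C$, I would first drive the sensor automata so that $C$ evaluates to true and then fire $\sigma$. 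After every non-sensor plant has reached a marked state, each remaining sensor automaton is independently driven to one of its marked states; these final moves are never blocked because sensor automata carry only uncontrollable events. The resulting product state is marked.

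The main obstacle is to argue that $C$ can always be made true by the sensor automata without disturbing the progress already made in $P_i$. Here Properties 3.d--3.g combine crucially: $C$ is in DNF (3.d), so it suffices to satisfy one conjunction; every plant referenced in $C$ is a sensor automaton (3.g), so its state can be changed only via uncontrollable events that no requirement blocks; each conjunction mentions each plant at most once (3.e), ruling out unsatisfiable literals such as $P.q_1\wedge P.q_2$ with $q_1\neq q_2$; and 3.f excludes literals $\neg P.q$ for single-state plants, for which no witnessing state would exist. By Property 2 every sensor automaton can reach any of its states from its current state, and by Property 1 such moves commute with those of $P_i$, so the sensors can be brought to a configuration satisfying the chosen conjunction before $\sigma$ is fired. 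Chaining this argument along the path in $P_i$ and iterating over all non-sensor and then all sensor plants yields the desired path to a marked state, establishing nonblockingness.
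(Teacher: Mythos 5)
Your proof is correct, and it reaches the conclusion by a more direct route than the paper. The paper packages the argument in the modular supervisory control framework: it first proves (Lemma~\ref{thm:indsupervisorisplantreq}) that for each single requirement $R_j$ the composition $P\parallel R_j$ already equals the controllable, nonblocking, and maximally permissive supervisor $\sup\mathit{CN}(P,R_j)$, then proves (Lemma~\ref{thm:modulararenonconflicting}) that these per-requirement supervisors are nonconflicting, and finally invokes the fact from Section~\ref{subsect:SCT} that nonconflicting modular supervisors together solve the original control problem. You instead prove global nonblockingness of $\mathcal{P}\parallel\mathcal{R}$ in one pass, driving each non-sensor plant along a local path to a marked state and, before each guarded controllable event, steering the unrestricted, strongly connected sensor automata into a configuration satisfying one conjunct of the DNF condition; this is exactly the content of the paper's Lemma~\ref{lemma:makeconditiontrue} and of the path-walking arguments inside Lemmas~\ref{thm:indsupervisorisplantreq} and~\ref{thm:modulararenonconflicting}, but without the detour through per-requirement supervisors and the nonconflicting check. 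Your handling of controllability (Property 3.b) and of maximal permissiveness (safety alone forces $\mathcal{L}(P\parallel S')\subseteq\mathcal{L}(P\parallel R)$) coincides with the paper's. What the paper's longer route buys is the additional information that per-requirement modular synthesis is itself restriction-free, tying the theorem to the modular-synthesis machinery; what your route buys is economy, since the same two ingredients (commutation of moves from the product structure, and satisfiability plus reachability of DNF conditions over sensor states via Properties 2 and 3.d--3.g) yield the theorem directly.
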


\section{Dependency Graphs of Control Problems}\label{sect:graphanalysis}
As indicated in~\cite{goorden_no_2019}, there exist published control problems that do not satisfy \textbf{CNMS}, but \hllll{still} do not require synthesis. In this section, the \textbf{CNMS} properties are relaxed.

\subsection{Observations from models}\label{subsect:observations}
The main reason the control problems of~\cite{reijnen_supervisory_2017,reijnen_application_2018,reijnen_supervisory_2019} do not satisfy the \textbf{CNMS} properties is the violation of Property 3.g. In these control problems, there exist requirements that restrict the behavior of controllable events based on the behavior of plant models other than sensor \hl{automata}, which in turn may also be restricted by other requirements. Several causes of this violation are described below.

As pointed out in~\cite{zaytoon_synthesis_2001}, it may be desired to model the physical interaction between actuator and sensor components, because a supervisor that is proven to be deadlock-free for a model without interactions may deadlock after implementation on the physical system with interactions. Adding shared events to model the interactions will violate Property 1, as it is no longer a product system. Transforming this new model into a product system representation, the actuator and sensor models are combined into one due to the shared events. Therefore, requirements no longer refer only to states of sensor \hl{automata} (violating Property~3.g).

Second, sometimes a requirement needs to refer explicitly to the state of an actuator to guarantee correct behavior of the system. For example, consider a hydraulic arm that has one actuator to extend it and one actuator to retract it. In this case, the modeler could express that it is undesired that both actuators are on at the same time, resulting in two requirements each expressing that one actuator may only be activated if the other actuator is deactivated.

Finally, timer-based requirements violate Property 3.g. A timer is typically modeled with a controllable event to activate it and an uncontrollable event to indicate the timeout of the timer. Therefore, the model of a timer is neither a sensor \hl{automaton} nor an actuator \hl{automaton}. If a timer is needed, typically two requirements associated with it express when it can be activated (the controllable events of the timer model are used) and what should happen when the timer has timed out (the state of the timer model is used). Service calls in a server-client architecture are modeled in the same way, see for example~\cite{loose_component-wise_2018}, where service calls are modeled with controllable events and responses with uncontrollable events.

\subsection{Dependency graph}
\begin{figure}
	\begin{center}
		\begin{tikzpicture}[every node/.style={font=\sffamily\small, color=black}, 
			graph node/.style={circle,draw=lbl,font=\small,fill=lbl, inner sep=0pt, minimum size=3pt}, node distance=1cm, 
			->-/.style={thin, color=lbl, decoration={
				markings,
				mark=at position 0.6 with {\arrow{stealth}}}, postaction={decorate}}
			]
			\node[graph node, label={above:$P_1$}] (v1) {};
			\node (v) [below = of v1] {};
			\node[graph node, label={left:$P_2$}] (v2) [left  = of v] {};
			\node[graph node, label={right:$P_3$}] (v3) [right = of v] {};
			
			\draw[->-] (v1) -- node [left] {$e_1$} (v2);
			\draw[->-] (v1) -- node [left] {$e_2$} (v3);
		\end{tikzpicture}
	\end{center}
	\caption{The dependency graph $G_{\mathit{cp}}$ of control problem $(\{P_1,P_2,P_3\},\{R\})$ with $R = \mu \textbf{ needs } P_2.q_1 \vee \neg P_3.q_1$ and $\mu\in \Sigma_{P_1}$. This graph has three vertices $P_1, P_2$, and $P_2$ and two edges $e_1$ and $e_2$.}
	\label{fig:exampleGraph}
\end{figure}
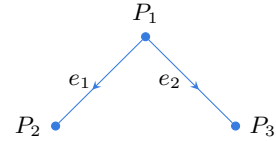

Control problems $(\mathcal{P},\mathcal{R})$ satisfying all properties of \textbf{CNMS} \emph{except} Property 3.g \hllll{are called to satisfy} the Relaxed Controllable and Nonblocking Modular Supervisors properties (\textbf{RCNMS}). \hllll{For control problems satisfying \textbf{CNMS} or \textbf{RCNMS},} a directed graph can be constructed indicating the dependencies between plant models from $\mathcal{P}$ based on the requirement models from $\mathcal{R}$. In this directed graph, each vertex represents a plant model from the control problem. For each requirement in the control problem, a set of edges is present in the graph such that the initial vertex of each edge is the plant model containing the event that is restricted by the requirement. Furthermore, for each plant model used in the condition of the requirement there is an edge having this plant model as terminating vertex. For example, consider the control problem $(\{P_1,P_2,P_3\},\{R\})$ with $R = \mu \textbf{ needs } P_2.q_1 \vee \neg P_3.q_1$ and $\mu\in \Sigma_{P_1}$. The dependency graph of this control problem is shown in Figure~\ref{fig:exampleGraph}. It has three vertices $P_1, P_2$ and $P_3$. For requirement $R$, two edges $e_1$ and $e_2$ are present such that $\mbox{init}(e_1)=\mbox{init}(e_2) = P_1$, as the restricted event of $R$ originates from $P_1$, $\mbox{ter}(e_1) = P_2$, as $P_2$ is mentioned in the condition of $R$, and $\mbox{ter}(e_2) = P_3$, as $P_3$ is mentioned in the condition of $R$. This example also shows that there may be multiple, but isomorphic, dependency graphs for the same control problem.

More formally, let the \emph{dependency graph} of control problem $(\mathcal{P},\mathcal{R})$ be a directed graph $G_{\mathit{cp}} = (\mathcal{P},E)$ \hlr{s.t.} $\hlr{\forall} R\in \mathcal{R}$ a set of edges $E_R\subseteq E$ is constructed \hlr{s.t.} $\hlr{\forall} e\in E_R$, $\mbox{init}(e) = P_i\in \mathcal{P} \hlr{\wedge} \mathit{event}(R)\in\Sigma_{P_i}$, and $\hlr{\forall} P_j\in \mathcal{P}$ used in $\mathit{cond}(R)$ $\hlr{\exists}e\in E_R$ with $\mbox{ter}(e) = P_j$, and finally $E = \bigcup_{R\in \mathcal{R}} E_R$.

A control problem satisfying \textbf{CNMS} results in an acyclic bipartite dependency graph.

\section{\hlr{Skipping Synthesis with Dependency Graphs}}\label{sect:acyclicgraphs}

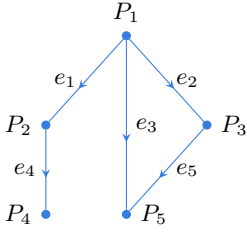
\begin{figure}
\begin{center}
\begin{tikzpicture}[every node/.style={font=\sffamily\small, color=black}, graph node/.style={circle,draw=lbl,font=\small,fill=lbl, inner sep=0pt, minimum size=3pt}, node distance=1cm, 
	->-/.style={thin, color=lbl, decoration={
			markings,
			mark=at position 0.6 with {\arrow{stealth}}}, postaction={decorate}}
	]
\node[graph node, label={above:$P_1$}] (v1) {};
\node (v) [below = of v1] {};
\node[graph node, label={left:$P_2$}] (v2) [left  = of v.center] {};
\node[graph node, label={right:$P_3$}] (v3) [right = of v.center] {};
\node[graph node, label={right:$P_5$}] (v4) [below = of v] {};
\node[graph node, label={left:$P_4$}] (v5) [left  = of v4.center] {};

\draw[->-] (v1) -- node [left]  {$e_1$} (v2);
\draw[->-] (v1) -- node [right] {$e_2$} (v3);
\draw[->-] (v1) -- node [right] {$e_3$} (v4);
\draw[->-] (v3) -- node [right] {$e_5$} (v4);
\draw[->-] (v2) -- node [left] {$e_4$} (v5);
\end{tikzpicture}
\end{center}
\caption{A dependency graph $G_{\mathit{cp}}$ of a control problem with five plant models satisfying \textbf{RCNMS} where $P_4$ and $P_5$ are sensor \hl{automata}.}
\label{fig:exampleGraph2}
\end{figure}

Figure~\ref{fig:exampleGraph2} shows the dependency graph of a control problem satisfying \textbf{RCNMS}, but not \textbf{CNMS}.
Plant models $P_2$ and $P_3$ have both incoming and outgoing edges, which indicate that the enablement of one or more events in each plant model is restricted by a requirement and that one or more states of each plant model are used in the condition of a requirement. Therefore, this model does not satisfy Properties 3.g of \textbf{CNMS}.
This example demonstrates why the control problem underlying this graph still requires no synthesis.

For the \textbf{CNMS} property, \hlrr{it is} shown with Theorem~\ref{thm:nosynthesis} that, essentially, no edge is permanently disabled. As the properties ensure that in a controlled system each sensor \hl{automaton} can always reach each state, the condition of each state-event invariant expression can be eventually satisfied, enabling the controllable event of each state-event invariant expression. Therefore, each non-sensor plant model can reach all states from each state.

This argument can be used inductively to show that a control problem satisfying \textbf{RCNMS} still requires no synthesis. As the behavior of plants $P_2$ and $P_3$ in Figure~\ref{fig:exampleGraph2} only depends on sensor plants $P_4$ and $P_5$, it holds that $P_2$ and $P_3$ can reach all states from each state. Since the behavior of $P_1$ only depends on the plant models $P_2$, $P_3$, and $P_5$, and it is already known that all these models can reach all states from each state, \hlrr{it} can conclude that $P_1$ also can reach all states from each state. Therefore, the complete control problem is \hlll{controllable, nonblocking, and maximally permissive}. This is formalized in Theorem~\ref{thm:nosynthesisrelaxed}. \hll{The proof of this theorem can be found in \ref{app:proofnosynthesisrelaxed}.} \hlll{The result of this theorem is that if a control problem satisfies \textbf{RCNMS}, then no synthesis is needed.}

\begin{theorem}[Acyclic \textbf{RCNMS}]\label{thm:nosynthesisrelaxed}
Let $(\mathcal{P},\mathcal{R})$ be a control problem satisfying \textbf{RCNMS}. \hlll{If the dependency graph $G_{\mathit{cp}}$ of $(\mathcal{P},\mathcal{R})$ is acyclic and \hlll{self}-loop free, then $\mathcal{P}\parallel \mathcal{R}$ is controllable and nonblocking, hence also maximally permissive.}
\end{theorem}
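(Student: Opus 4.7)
The plan is to establish controllability, nonblockingness, and maximal permissiveness separately. Controllability is immediate: Property~3.b (inherited by \textbf{RCNMS}) forces every state-event invariant to restrict a controllable event, so $\mathcal{R}$ cannot disable any uncontrollable transition of $\mathcal{P}$. Maximal permissiveness will follow once nonblockingness is shown, since $\mathcal{P}\parallel\mathcal{R}$ is by construction the least restrictive closed-loop behavior compatible with $\mathcal{R}$: any other controllable and nonblocking supervisor can only disable a subset of what is already disabled. The substantive part is nonblockingness.

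Since $\mathcal{P}$ is a product system and each $P_k$ has a marked state (Properties~1 and~2), the marked states of $\mathcal{P}\parallel\mathcal{R}$ are exactly the tuples in which every component is marked. So I would reduce nonblockingness to the stronger statement that in $\mathcal{P}\parallel\mathcal{R}$, from any reachable global state one can reach any target configuration of $\mathcal{P}$. To exploit $G_{\mathit{cp}}$ being acyclic and self-loop free, I would fix a reverse topological ordering $P_1,\ldots,P_m$ of $\mathcal{P}$ such that every outgoing edge from $P_i$ terminates at some $P_j$ with $j<i$. By construction of $G_{\mathit{cp}}$, any requirement restricting an event of $P_i$ then has a condition whose atomic propositions reference only $\{P_1,\ldots,P_{i-1}\}$.

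The central lemma, proved by induction on $i$, would read: from any reachable state of $\mathcal{P}\parallel\mathcal{R}$ and any target $(q_1',\ldots,q_i')$ with $q_k'\in Q_{P_k}$, one can reach a state in which each $P_k$ (for $k\leq i$) sits in $q_k'$, firing only events of $\Sigma_{P_1}\cup\cdots\cup\Sigma_{P_i}$. The base case $i=1$ is immediate: $P_1$ has no outgoing edge and therefore no restricted event, so strong connectivity of $P_1$ (Property~2) together with the product-system property supplies a path using only $\Sigma_{P_1}$. For the inductive step I would use strong connectivity of $P_i$ to select a path from its current state to $q_i'$ and traverse it transition by transition. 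A transition $\sigma$ of $P_i$ restricted by $\sigma \textbf{ needs } C$ can be enabled by picking a conjunct of the DNF $C$, which by Properties~3.d--3.f amounts to choosing a satisfiable positive or negative literal on at most one state of each $P_k$ with $k<i$ (negation is harmless since Property~3.f rules it out on single-state plants); the induction hypothesis at $i-1$ then drives $P_1,\ldots,P_{i-1}$ into a configuration satisfying the conjunct while leaving $P_i$ fixed. Once $P_i$ has reached $q_i'$, one more application of the hypothesis at $i-1$ repositions $P_1,\ldots,P_{i-1}$ to $(q_1',\ldots,q_{i-1}')$ without firing any event of $P_i$. Applying the lemma at $i=m$ with a target of marked components yields coreachability of every reachable state, hence nonblockingness.

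The hard part will be making the bookkeeping inside the inductive step rigorous: I must ensure that every event fired to satisfy a condition or to reposition lower-indexed plants belongs to $\Sigma_{P_1}\cup\cdots\cup\Sigma_{P_{i-1}}$, since firing an event of $P_i$ would derail the chosen path. This is where the product-system assumption (disjoint alphabets) and the reverse topological ordering earn their keep: requirements governing $P_1,\ldots,P_{i-1}$ only reference plants of strictly smaller index, so each nested use of the induction hypothesis remains confined to the already handled portion of the system and cannot move $P_i$.
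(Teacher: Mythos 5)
Your proof is correct, and its core engine is the same as the paper's: an induction over the acyclic dependency structure that maintains, for the already-processed plants, the invariant that they can be driven under control into any desired configuration, with restricted transitions unlocked by satisfying one conjunct of the DNF condition (satisfiable thanks to Properties 3.d--3.f) using only plants earlier in the order. The scaffolding differs, though. The paper first unfolds the DAG into a forest of trees by duplicating every subgraph reachable from a vertex with multiple incoming edges (Lemma~\ref{lem:forest}, which asserts that duplicating plants and requirements preserves the supervisor) and then inducts on tree depth; you instead fix a reverse topological ordering of the DAG and induct on the index $i$. Your route is more elementary and arguably tighter: it makes the duplication lemma unnecessary (a topological order exists for any DAG, diamonds and all), and your explicit bookkeeping that the inductive step fires only events of $\Sigma_{P_1}\cup\cdots\cup\Sigma_{P_{i-1}}$ --- legitimate because the product-system property gives disjoint alphabets and the ordering plus self-loop freeness guarantee that requirements restricting lower-indexed plants never reference higher-indexed ones --- makes rigorous what the paper leaves implicit in its appeal to strong connectivity of the tree subproblems. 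What the paper's route buys is reuse: each tree-structured subproblem is discharged by citing the conference-paper lemmas (the analogues of Lemmas~\ref{lemma:makeconditiontrue} and~\ref{thm:indsupervisorisplantreq}) nearly verbatim, whereas you re-derive that mechanism inside your inductive step. Both arguments treat controllability (immediate from Property 3.b) and maximal permissiveness (automatic for $\mathcal{P}\parallel\mathcal{R}$ once it is controllable and nonblocking, since no safe supervisor can exceed the behavior $\mathcal{L}(\mathcal{P}\parallel\mathcal{R})$) in the same way.
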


\section{\hlr{Sectionalizing Control Problems}}\label{sect:cyclicgraphs}
\begin{figure}
\begin{center}
\begin{tikzpicture}[->,>=stealth',shorten >=1pt,main node/.style={circle,draw,font=\small}, node distance=1.5cm]
\node[main node, label={below:$q_1$}, initial, initial text={$P_1$},initial where=left, accepting] (l1) {};
\node[main node, label={below:$q_2$}]  (l2) [right = of l1] {};
\node[main node, label={below:$q_3$}, initial, initial text={$P_2$},initial where=left, accepting] (l3) [below = of l1] {};
\node[main node, label={below:$q_4$}] (l4) [right = of l3] {};


\path[every node/.style={font=\sffamily\small}, bend left=15]
    (l1) edge node [above] {$a$} (l2)
    (l2) edge node [below] {$b$} (l1)
    (l3) edge node [above] {$c$} (l4)
    (l4) edge node [below] {$d$} (l3)
;

\node (r1) [right = 1cm of l2] {$R_1: b \textbf{ needs } P_2.q_4$};
\node (r2) [below = .1cm of r1] {$R_2: d \textbf{ needs } P_1.q_2$};
\node (r3) [right = 1cm of l4] {$R_3: a \textbf{ needs } P_2.q_3$};
\node (r4) [below = .1cm of r3] {$R_4: c \textbf{ needs } P_1.q_1$};


\end{tikzpicture}
\end{center}
\caption{Both control problems $\mathit{CP}_1=(\{P_1, P_2\},\{R_1, R_2\})$ and $\mathit{CP}_2=(\{P_1, P_2\},\{R_3, R_4\})$ result in cyclic dependency graphs, but the first contains a blocking issue while the second one \hlr{does} not.}
\label{fig:exampleCycle}
\end{figure}
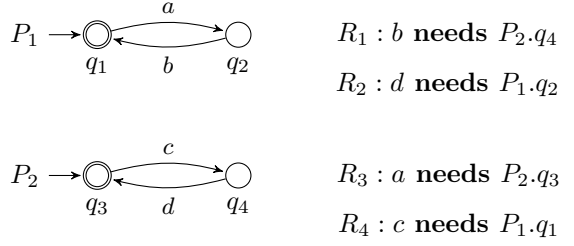

For the case that a dependency graph is cyclic, supervisory control synthesis may be needed as $\mathcal{P}\parallel \mathcal{R}$ could be blocking. Figure~\ref{fig:exampleCycle} shows two control problems $\mathit{CP}_1=(\{P_1, P_2\},\{R_1, R_2\})$ and $\mathit{CP}_2=(\{P_1, P_2\},\{R_3, R_4\})$, both based on the same set of plant models $\{P_1, P_2\}$. Those control problems result in the same cyclic dependency graph. However, $\mathit{CP}_1$ is blocking, while $\mathit{CP}_2$ is nonblocking.

\hlll{Similarly, a dependency graph with a self-loop might also indicate that $\mathcal{P}\parallel \mathcal{R}$ is blocking. Consider again control problem $\mathit{CP}_1$, but now requirement $R_2$ is replaced by $d \textbf{ needs } P_2.q_3$. This results in a self-loop in the dependency graph and the control problem is blocking. Yet, if requirement $R_2$ is replaced by $d \textbf{ needs } P_2.q_4$, the dependency graph still has a self-loop, but the control problem is nonblocking.}

So, a dependency graph containing cycles or self-loops may or may not require synthesis to obtain a controllable, nonblocking, and maximally permissive supervisor. In the remainder of this section \hlrr{it is shown} that in case of a cyclic dependency graph the original control problem can be reduced to partial control problems containing the cycles.

\subsection{Control problem reduction}

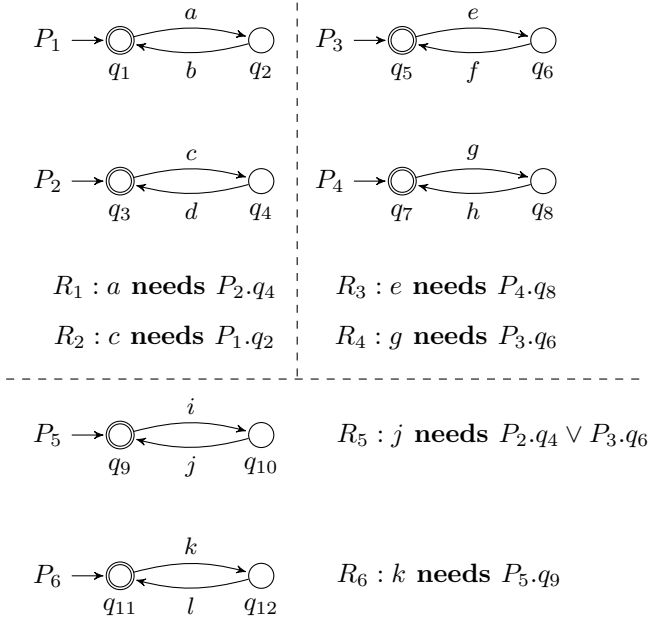
\begin{figure}
\begin{center}
\begin{tikzpicture}[->,>=stealth',shorten >=1pt,main node/.style={circle,draw,font=\small}, node distance=1.5cm]
\node[main node, label={below:$q_1$}, initial, initial text={$P_1$},initial where=left, accepting] (l1) {};
\node[main node, label={below:$q_2$}]  (l2) [right = of l1] {};
\node[main node, label={below:$q_3$}, initial, initial text={$P_2$},initial where=left, accepting] (l3) [below = of l1] {};
\node[main node, label={below:$q_4$}] (l4) [right = of l3] {};

\node[main node, label={below:$q_5$}, initial, initial text={$P_3$},initial where=left, accepting] (l5) [right = of l2] {};
\node[main node, label={below:$q_6$}]  (l6) [right = of l5] {};
\node[main node, label={below:$q_7$}, initial, initial text={$P_4$},initial where=left, accepting] (l7) [below = of l5] {};
\node[main node, label={below:$q_8$}] (l8) [right = of l7] {};

\path[every node/.style={font=\sffamily\small}, bend left=15]
    (l1) edge node [above] {$a$} (l2)
    (l2) edge node [below] {$b$} (l1)
    (l3) edge node [above] {$c$} (l4)
    (l4) edge node (d) [below] {$d$} (l3)
    (l5) edge node [above] {$e$} (l6)
    (l6) edge node [below] {$f$} (l5)
    (l7) edge node [above] {$g$} (l8)
    (l8) edge node (h) [below] {$h$} (l7)
;

\node (r1) at ([xshift=-10]d.south) [below=5mm] {$R_1: a \textbf{ needs } P_2.q_4$};
\node (r2) [below = 1mm of r1] {$R_2: c \textbf{ needs } P_1.q_2$};
\node (r3) at ([xshift=-10]h.south) [below=5mm] {$R_3: e \textbf{ needs } P_4.q_8$};
\node (r4) [below = 1mm of r3] {$R_4: g \textbf{ needs } P_3.q_6$};

\node (a) [below left  = 2.5cm and 1.5cm of l3.center] {};
\node (b) [right = 8.5cm of a.center] {};
\draw[-,dashed] (a) -- (b);

\path (l2) -- node [pos=.2] (n) {} (l5);
\draw[-,dashed] (n) +(0,5mm) -- (intersection of a--b and n--[yshift=1pt]n);

\node[main node, label={below:$q_9$}, initial, initial text={$P_5$},initial where=left, accepting] (l9) [below = 3cm of l3] {};
\node[main node, label={below:$q_{10}$}] (l10) [right = of l9] {};
\node[main node, label={below:$q_{11}$}, initial, initial text={$P_6$},initial where=left, accepting] (l11) [below = of l9] {};
\node[main node, label={below:$q_{12}$}] (l12) [right = of l11] {};

\path[every node/.style={font=\sffamily\small}, bend left=15]
    (l9)  edge node [above] {$i$} (l10)
    (l10) edge node [below] {$j$} (l9)
    (l11) edge node [above] {$k$} (l12)
    (l12) edge node [below] {$l$} (l11)
;

\node (r1) [right = .7cm of l10] {$R_5: j \textbf{ needs } P_2.q_4 \vee P_3.q_6$};
\node (r1) [right = .7cm of l12] {$R_6: k \textbf{ needs } P_5.q_9$};
\end{tikzpicture}
\end{center}
\caption{A control problem $\mathit{CP}=(\{P_1,\ldots,P_6\},\{R_1,\ldots,R_6\})$.}
\label{fig:exampleVertexSetAnalysis}
\end{figure}

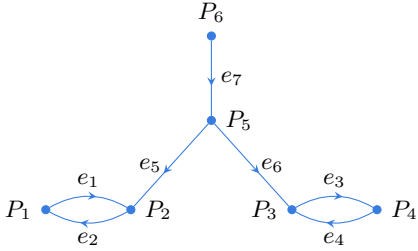
\begin{figure}
\begin{center}
\begin{tikzpicture}[every node/.style={font=\sffamily\small, color=black}, graph node/.style={circle,draw=lbl,font=\small,fill=lbl, inner sep=0pt, minimum size=3pt}, node distance=1cm, 
	->-/.style={thin, color=lbl, decoration={
			markings,
			mark=at position 0.6 with {\arrow{stealth}}}, postaction={decorate}}
	]
\node[graph node, label={left:$P_1$}] (v1) {};
\node[graph node, label={right:$P_2$}] (v2) [right = of v1] {};
\node[graph node, label={left:$P_3$}] (v3) [right = 2cm of v2] {};
\node[graph node, label={right:$P_4$}] (v4) [right = of v3] {};

\path (v2) -- node [pos=.5] (v) {} (v3);
\node[graph node, label={right:$P_5$}] (v5) [above = of v]  {};
\node[graph node, label={above:$P_6$}] (v6) [above = of v5] {};

\draw[->-] (v1) to[bend left] node [above] {$e_1$} (v2);
\draw[->-] (v2) to[bend left] node [below] {$e_2$} (v1);
\draw[->-] (v3) to[bend left] node [above] {$e_3$} (v4);
\draw[->-] (v4) to[bend left] node [below] {$e_4$} (v3);

\draw[->-] (v5) -- node [left] {$e_5$} (v2);
\draw[->-] (v5) -- node [right] {$e_6$} (v3);
\draw[->-] (v6) -- node [right] {$e_7$} (v5);


\end{tikzpicture}
\end{center}
\caption{The dependency graph of the control problem shown in Figure~\ref{fig:exampleVertexSetAnalysis}.}
\label{fig:exampleGraphVertexSetAnalysis}
\end{figure}

From the dependency graph, all strongly connected components containing a cycle are identified. For each strongly connected component, the set of vertices (plant models) is denoted by $\phi$, and the collection of these sets is denoted by $\Phi = \{\phi_1,\ldots,\phi_m\}$. From the definition of strongly connected components, it follows that they are non-overlapping. Figure~\ref{fig:exampleVertexSetAnalysis} shows control problem $\mathit{CP}$, with its dependency graph $G_{\mathit{CP}}$ shown in Figure~\ref{fig:exampleGraphVertexSetAnalysis}. $G_{\mathit{CP}}$ contains two cycles $c_1=P_1P_2P_1$ and $c_2=P_3P_4P_3$, and the strongly connected components of these two cycles are $\phi_1 = \{P_1, P_2\}$ and $\phi_2 = \{P_3, P_4\}$.

This example also shows plants whose behavior depends on the behavior of these strongly connected components. Requirement $R_5$ restricts the behavior of component model $P_5$ based on the behavior of component models $P_2$ and $P_3$. In this example, a supervisor is needed, as any synthesized supervisor for requirements $R_1, R_2, R_3$, and $R_4$ would make states $P_2.q_4$ and $P_3.q_6$ unreachable in the closed-loop system, and therefore requirement $R_5$ never enables event $j$. A supervisor is needed to disable event $i$ to prevent component $P_5$ from being blocked in state $q_{10}$. Therefore, it is insufficient to only analyze the strongly connected components in isolation.

\hlr{As a next step}, vertices are added recursively to these strongly connected components. A vertex is added to a set of vertices if there exists an edge such that this edge originates in this added vertex and terminates in one of the vertices already in the set. Eventually, the strongly connected component is enlarged with those vertices from which there exists a path to a vertex in the strongly connected component. Formally, the extended set of vertices for each strongly connected component with a cycle $\phi_i$, denoted by $V_{\phi_i}$, is defined as $V_{\phi_i} = \{P\in \mathcal{P} \ |\ \exists p = x_0x_1\ldots x_k, k \geq 0,  p \in \mathit{Path}(G_{\mathit{CP}}) \mbox{ s.t. } x_0 = P \wedge x_k\in\phi_i\}$, and $\mathbb{V} = \{V_{\phi_1}, \ldots, V_{\phi_m}\}$, with $\mathit{Path}(G_{\mathit{CP}})$ the set of all paths in $G_{\mathit{CP}}$. The extended sets of vertices for the example are calculated as $V_{\phi_1} = \{P_1, P_2, P_5, P_6\}$ and $V_{\phi_{2}} = \{P_3, P_4, P_5, P_6\}$.

Still, it is insufficient to only analyze each extended vertex set $V_{\phi_i}$. Two extended vertex sets may share vertices. This sharing could be problematic. In the running example, $V_{\phi_1}$ and $V_{\phi_2}$ share vertices $P_5$ and $P_6$.

Shared vertices between extended sets $V_{\phi_i}$ and $V_{\phi_j}$ will not always imply that it is necessary to analyze the partial control problem represented by $V_{\phi_i} \cup V_{\phi_j}$. Sometimes, it is still sufficient to analyze the partial control problems of $V_{\phi_i}$ and $V_{\phi_j}$ separately. For the control problem $\mathit{CP}$ of Figure~\ref{fig:exampleVertexSetAnalysis}, $V_{\phi_1}$ and $V_{\phi_2}$ should be combined, as the edges $e_5$ and $e_6$ relate to the same requirement $R_5$. The evaluation of the condition of requirement $R_5$ requires the result of the analysis of both strongly connected components $\phi_1$ and $\phi_2$. If requirement $R_5$ \hlrr{is replaced} by, for example, the two requirements $R_5': j \textbf{ needs } P_2.q_4$ and $R_5'': j \textbf{ needs } P_3.q6$, the extended sets $V_{\phi_i}$ and $V_{\phi_j}$ do \emph{not} need to be merged for analyzing the cycles. While the dependency graph remains the same, edges $e_5$ and $e_6$ are now induced by different requirements.

Unfortunately, the above reasoning cannot be generalized. The control problem in Figure~\ref{fig:exampleVertexSetAnalysis} \hlrr{is modified} again. An additional transition is added to plant model $P_5$ from state $q_{10}$ to $q_9$ labeled with $j'$. Requirement $R_5$ is replaced by two requirements $R_5': j \textbf{ needs } P_2.q_4$ and $R_5'': j'\textbf{ needs } P_3.q_6$. Again, the dependency graph in Figure~\ref{fig:exampleGraphVertexSetAnalysis} remains unchanged. The controllable, nonblocking, and maximally permissive supervisor $S_1$ synthesized for the partial control problem $(\{P_1,P_2,P_5,P_6\},\{R_1,R_2,R_5',R_6\})$ would disable the transition labeled with event $j$, and the controllable, nonblocking, and maximally permissive supervisor $S_2$ synthesized for the partial control problem $(\{P_3,P_4,P_5,P_6\},\{R_3,R_4,R_5'',R_6\})$ would disable the transition labeled with event $j'$. Now, $S_1\parallel S_2$ is blocking, because plant $P_5$ deadlocks in state $q_{10}$, as the supervisors together disable both event $j$ and event $j'$.

Therefore, two extended sets of vertices need to be merged once they share a vertex. Let $\sim\ \subseteq \mathbb{V}\times\mathbb{V}$ be a relation between extended sets of vertices. $(V_{\phi_i},V_{\phi_j})\in\ \sim$ if and only if $V_{\phi_i}\cap V_{\phi_j}\neq \emptyset$, i.e., they share at least one vertex, \hlr{or $(V_{\phi_i},V_{\phi_k})\in\ \sim$ and $(V_{\phi_k}, V_{\phi_j})\in\ \sim$ for some $V_{\phi_k}$, i.e., they share a vertex with a common extended set of vertices}. From this definition, it follows directly that $\sim$ is \hlr{an equivalence relation, as it is reflexive, symmetric, and transitive}.

Now, the partition $\mathbb{W}$ of $\mathbb{V}$ is the set of all equivalence classes of $\mathbb{V}$ with equivalence relation $\sim$, i.e., $\mathbb{W} = \mathbb{V}/ \sim$ is the quotient set of $\mathbb{V}$ by $\sim$. For the example shown in Figure~\ref{fig:exampleGraphVertexSetAnalysis}, the partition $\mathbb{W}$ is $\{\{P_1,\ldots,P_6\}\}$.

A \emph{simplified partial control problem} $(P'_s,\tilde{R}_s)$ \emph{represented by a subset of vertices} $P'_s\subseteq \mathcal{P}$ is constructed as follows. First, $R'_s = \{R\in \mathcal{R}\ |\ \exists P\in P'_s \mbox{ s.t. } \mathit{event}(R)\in\Sigma_{P}\}$. Subsequently, the condition of each requirement in this set is adjusted where each literal containing reference to a state of a plant \emph{not} in $P'_s$ is replaced by the boolean literal $\textbf{T}$, resulting in the set of adjusted requirements $\tilde{R}_s$.

Theorem~\ref{thm:cyclicgraphs} contains the main result of this section: based on the dependency graph, synthesizing a supervisor can be performed following a modular approach which guarantees (global) controllability, nonblockingness, and maximal permissiveness. This theorem can be used to reduce the computational complexity of supervisor synthesis. \hll{The proof of this theorem can be found in \ref{app:proofcyclicgraphs}.}

\begin{theorem}[Cyclic \textbf{RCNMS}]\label{thm:cyclicgraphs}
Let $(\mathcal{P},\mathcal{R})$ be a control problem satisfying \textbf{RCNMS} and let $G_{\mathit{cp}}$ be its dependency graph. For each $W\in\mathbb{W}$, let $S_{W}$ be a controllable, nonblocking, and maximally permissive supervisor for the simplified partial control problem represented by $\bigcup_{V\in W} V$. Then $\mathcal{P} \parallel \mathcal{R} \parallel (\parallel_{W\in\mathbb{W}} S_{W})$ is a modular, controllable, nonblocking, and maximally permissive supervisor of $(\mathcal{P},\mathcal{R})$.
\end{theorem}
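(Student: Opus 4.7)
The plan is to exploit the partition of plants induced by the equivalence relation $\sim$. First I would establish that for distinct $W, W' \in \mathbb{W}$, the sets $P'_W = \bigcup_{V \in W} V$ and $P'_{W'}$ are disjoint, which is direct from $\sim$ being an equivalence relation on $\mathbb{V}$. Let $\mathcal{P}_{in} = \bigcup_{W \in \mathbb{W}} P'_W$ and $\mathcal{P}_{out} = \mathcal{P} \setminus \mathcal{P}_{in}$. I would then prove two structural facts that drive the rest of the argument: (i) any requirement whose restricted event lies in a plant of $\mathcal{P}_{out}$ can only reference plants of $\mathcal{P}_{out}$, for otherwise prepending that edge to a path reaching $\phi_i$ would place the host plant in $V_{\phi_i} \subseteq \mathcal{P}_{in}$; and (ii) any requirement whose restricted event lies in $P'_W$ can only reference plants in $P'_W \cup \mathcal{P}_{out}$, for otherwise two distinct classes would share a vertex and be merged under $\sim$. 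Consequently, the dependency subgraph induced on $\mathcal{P}_{out}$ is acyclic and self-loop free.

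The second step applies Theorem~\ref{thm:nosynthesisrelaxed} to the restricted control problem $(\mathcal{P}_{out}, \mathcal{R}_{out})$, where $\mathcal{R}_{out}$ collects the requirements whose events belong to $\mathcal{P}_{out}$. This sub-problem satisfies \textbf{RCNMS} and has an acyclic, self-loop free dependency graph, so $\mathcal{P}_{out} \parallel \mathcal{R}_{out}$ is controllable, nonblocking, and maximally permissive. The underlying argument of Theorem~\ref{thm:nosynthesisrelaxed}, combined with the strong connectivity of individual plants (Property 2), in fact yields the stronger reachability property that any configuration of outside-plant states can be reached from any reachable outside state using only events in $\Sigma_{\mathcal{P}_{out}}$. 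Crucially, the simplification producing $\tilde{R}_W$ from the requirements affecting $P'_W$ only replaces literals on $\mathcal{P}_{out}$-plants by $\textbf{T}$, yielding conditions at least as permissive as the originals.

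The third step verifies each property of $\mathcal{P} \parallel \mathcal{R} \parallel (\parallel_{W} S_{W})$. Modularity is immediate by construction. For controllability: each $S_W$ is controllable, its alphabet is a subset of $\Sigma_{P'_W}$ and hence disjoint from the alphabets of plants outside $P'_W$ (product system), $\mathcal{R}$ restricts only controllable events by Property 3.b, and the outside subsystem is controllable by Theorem~\ref{thm:nosynthesisrelaxed}. For nonblocking, from any reachable state I would reach a marked state in two phases. In phase one, for each $W$ I follow the path guaranteed by $S_W$ in the simplified closed-loop toward a marked $P'_W$-state; for each transition $\mu$ on this path whose original requirement has condition $C = \bigvee_i (C_i^{in} \wedge C_i^{out})$ with a nontrivial outside part, I first maneuver the outside plants to satisfy some $C_i^{out}$ for which $C_i^{in}$ holds, using the reachability property from step two and without changing any inside state because the alphabets are disjoint. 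In phase two, I drive the outside plants to their marked states using the nonblocking property of $\mathcal{P}_{out}$. For maximal permissiveness: any safe, controllable, nonblocking supervisor must in particular be safe for each simplified subproblem (the original requirements imply the simplified ones), hence its generated language is contained in $\mathcal{L}(P'_W \parallel S_W)$ for each $W$; the outside region needs no restriction.

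The main obstacle is rigorously justifying the interleaving in phase one of the nonblocking argument. Specifically, one must show that outside maneuvers do not invalidate the inside parts $C^{in}$ relied on by $S_W$'s path (which follows from disjoint alphabets, since outside events cannot touch inside states), and that the plans for different classes $W$ do not interfere (again by product system disjointness and by the isolation between different $P'_W$ established in step one). Turning these local observations into a global, constructive schedule of transitions will likely proceed by induction on the length of the $S_W$-paths, defining an interleaved execution that realises a joint path to a marked state of the full composition.
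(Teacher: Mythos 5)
Your proposal is correct and follows essentially the same route as the paper's proof: split $\mathcal{P}$ into the acyclic ``outside'' part (the paper's $\mathcal{P}'$, your $\mathcal{P}_{out}$) handled by Theorem~\ref{thm:nosynthesisrelaxed}, justify the $\textbf{T}$-substitutions via strong connectivity of $\mathcal{P}'\parallel\mathcal{R}'$ together with the fact that replaced literals only reference outside plants, and compose the $S_W$ conflict-free using disjointness of the classes in $\mathbb{W}$. Your write-up is in fact more explicit than the paper's (notably the structural facts (i)--(ii) and the two-phase interleaved nonblocking schedule, which the paper leaves implicit), but the decomposition and key ingredients are identical.
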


Theorem~\ref{thm:cyclicgraphs} shows for which partial control problems synthesis might still be needed and for which part of the system no synthesis is needed. In the worst-case situation, the original control problem is the only single equivalence class in $\mathbb{W}$ \hlr{and no reduction can be achieved with the presented method}. Sections~\ref{sect:casestudy} and~\ref{sect:tunnel} show that there exist industrial systems for which the control problem can be reduced. There are two options available for those partial control problems that might need synthesis: either synthesize a supervisor with an existing synthesis algorithm, like monolithic~\citep{ramadge_control_1989}, compositional~\citep{mohajerani_framework_2014}, and incremental synthesis~\citep{brandin_incremental_2004}, or reason with an additional method that synthesis is still not needed (as it is known for the case studies in~\cite{reijnen_supervisory_2017, reijnen_application_2018, reijnen_supervisory_2019} that no synthesis is needed). The second option is left open for future work.

\section{\hlr{Case Study 1: FESTO production line}}\label{sect:casestudy}

\begin{figure}
\center
\includegraphics[width=.9\linewidth]{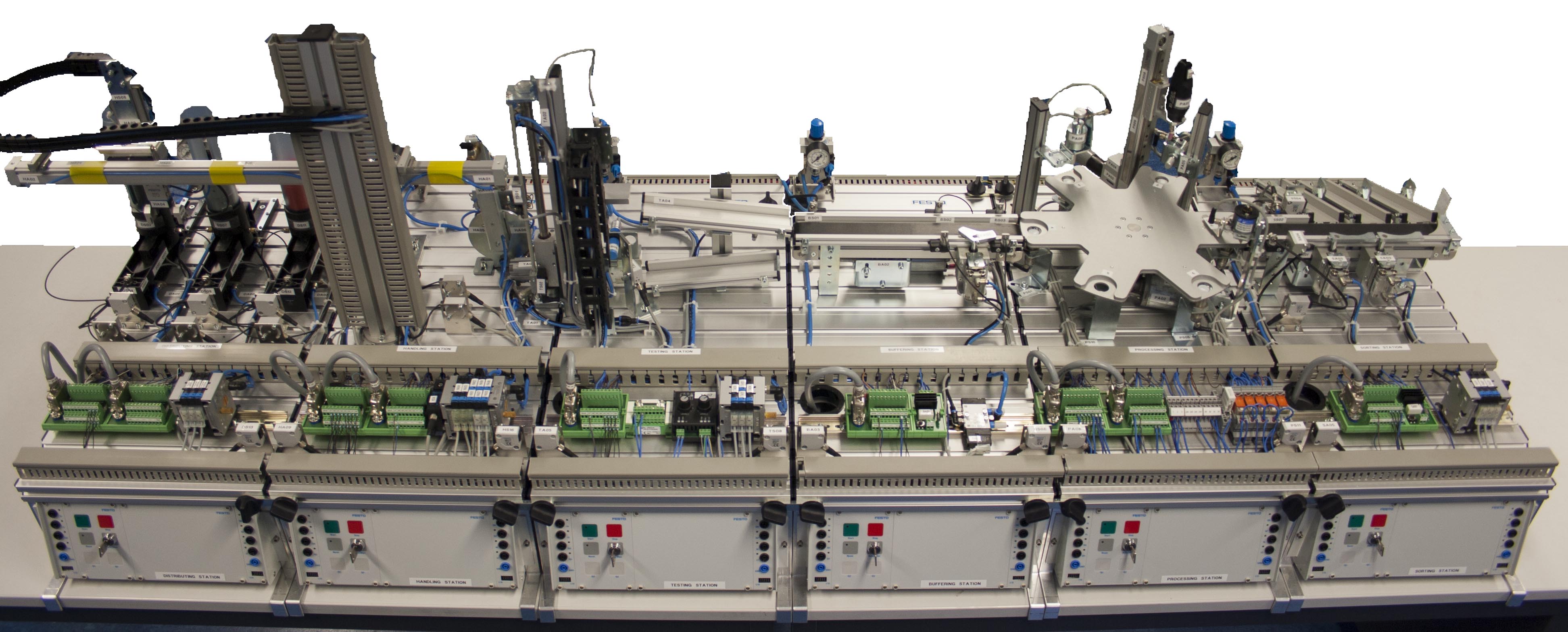}
\caption{Overview of the FESTO production line.\label{fig:festo}}
\end{figure}

In this section, the proposed method is demonstrated with a case study. For this case study, a small-scale production line consisting of six workstations has been considered, see Figure~\ref{fig:festo}. The hardware of the system is produced by Festo Didactic for vocational training in the field of industrial automation. This system has been previously modeled in~\cite{reijnen_application_2018}. In the remainder of this section, first a description of this production line \hlrr{is provided}. Subsequently, two workstations in isolation \hlrr{are analyzed} to demonstrate Theorems~\ref{thm:nosynthesis} and~\ref{thm:nosynthesisrelaxed}. Finally, the complete production line is analyzed to demonstrate Theorem~\ref{thm:cyclicgraphs}.

\subsection{Case description}\label{sect:casedescription}
While no real production is taking place, all movements, velocities, and timings are as if it were. In total, the production line consists of 28 actuators, like DC motors and pneumatic cylinders, and 59 sensors, like capacitive, optical, and inductive ones.

The intended controlled behavior is as follows. Products enter the production line through the \emph{distribution} station where they have been placed in three storage tubes. For each storage tube, a pusher is able to release a new product. The second workstation, the \emph{handling} station, transports products from the distribution station to the testing station in two steps. First, a pneumatic gripper transports released products to an intermediate buffer. From this buffer, a transfer cylinder picks them up and places them in the \emph{testing} station where the product height is tested. Correct products are moved by an air slide to the next station while rejected products are stored in a local buffer. In the fourth station, the \emph{buffering} station, products can be held on a conveyor belt. A separator controls the release of products from the conveyor belt. At the next station, the \emph{processing} station, products are processed. A turntable with six places rotates products through this station. After entering the processing station, the product is moved to a testing location where the orientation of the product is checked. Subsequently, at the next location a hole is drilled in the product only if the orientation is correct. At the fourth location, processed products are ejected to the sorting station. The last two locations can be used to correct the orientation if needed, and in that case the product can be processed again. In the final workstation, the \emph{sorting} station, products are stored on one of the three slides, depending on color and the material of the product. Two pneumatic gates can be used to divert the product to the correct slide.

In~\cite{reijnen_application_2018}, a model of the production line is presented, which is slightly modified for this case study to have exclusively state-event invariant expressions; adjustments are indicated by comments in the model. The model contains 75 plant models and 214 requirement models, which can be accessed at a GitHub repository\footnote{\url{https://github.com/magoorden/NonblockingModularSupervisors}}.

Performing monolithic synthesis on this model reveals that the synthesized supervisor does not impose any additional restrictions to ensure controllable and nonblocking behavior, i.e., the control problem can already act as a modular, controllable, nonblocking, and maximally permissive supervisor.

\subsection{Distribution station}
The distributed construction of the model of the production line eases the individual analysis of workstations. To start with, the distribution station is analyzed.

\begin{figure}
\center
\includegraphics[width=.75\linewidth]{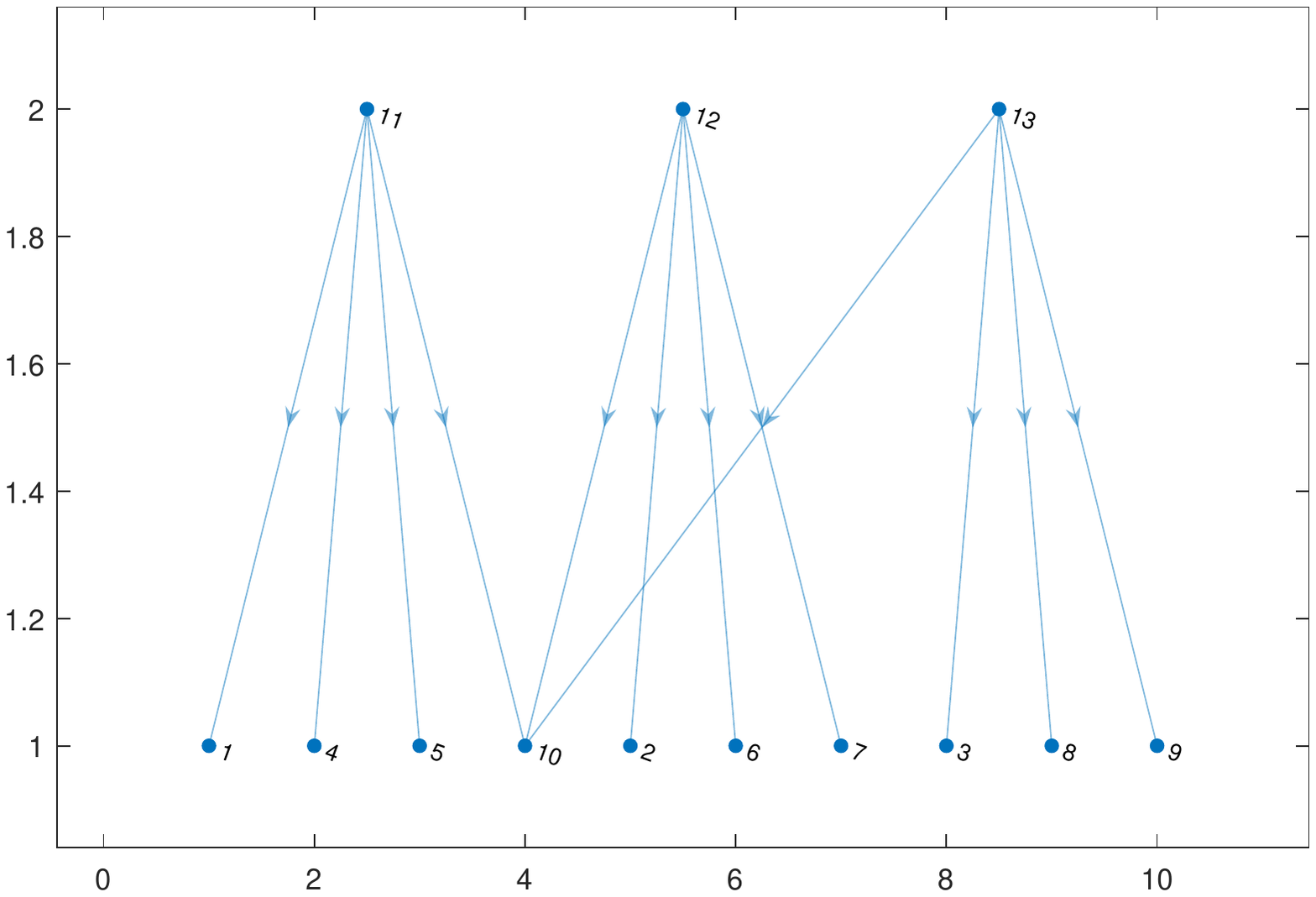}
\caption{The dependency graph of the distribution station. For readability, numbers are used as nodes instead of the actual names from the model.\label{fig:distribution}}
\end{figure}

Figure~\ref{fig:distribution} shows the dependency graph of the distribution station. To prevent cluttering of names, numbers are displayed in this and subsequent figures instead of the actual plant names in the model. The readme file in the model repository explains how the actual names can be obtained. Plant models 1 through 10 are sensor \hl{automata}, i.e., they only have uncontrollable events in their alphabet, plant models 11, 12, and 13 are actuator \hl{automata}, i.e., they only have controllable events in their alphabet. As each edge in this dependency graph has an actuator \hl{automaton} as initial vertex and a sensor \hl{automaton} as terminal vertex, Theorem~\ref{thm:nosynthesis} applies. This indicates that, if a supervisor is only needed for this workstation, synthesis can be skipped and the control problem already represents the supervisor.

\subsection{Sorting station}
\begin{figure}
\center
\includegraphics[width=.9\linewidth]{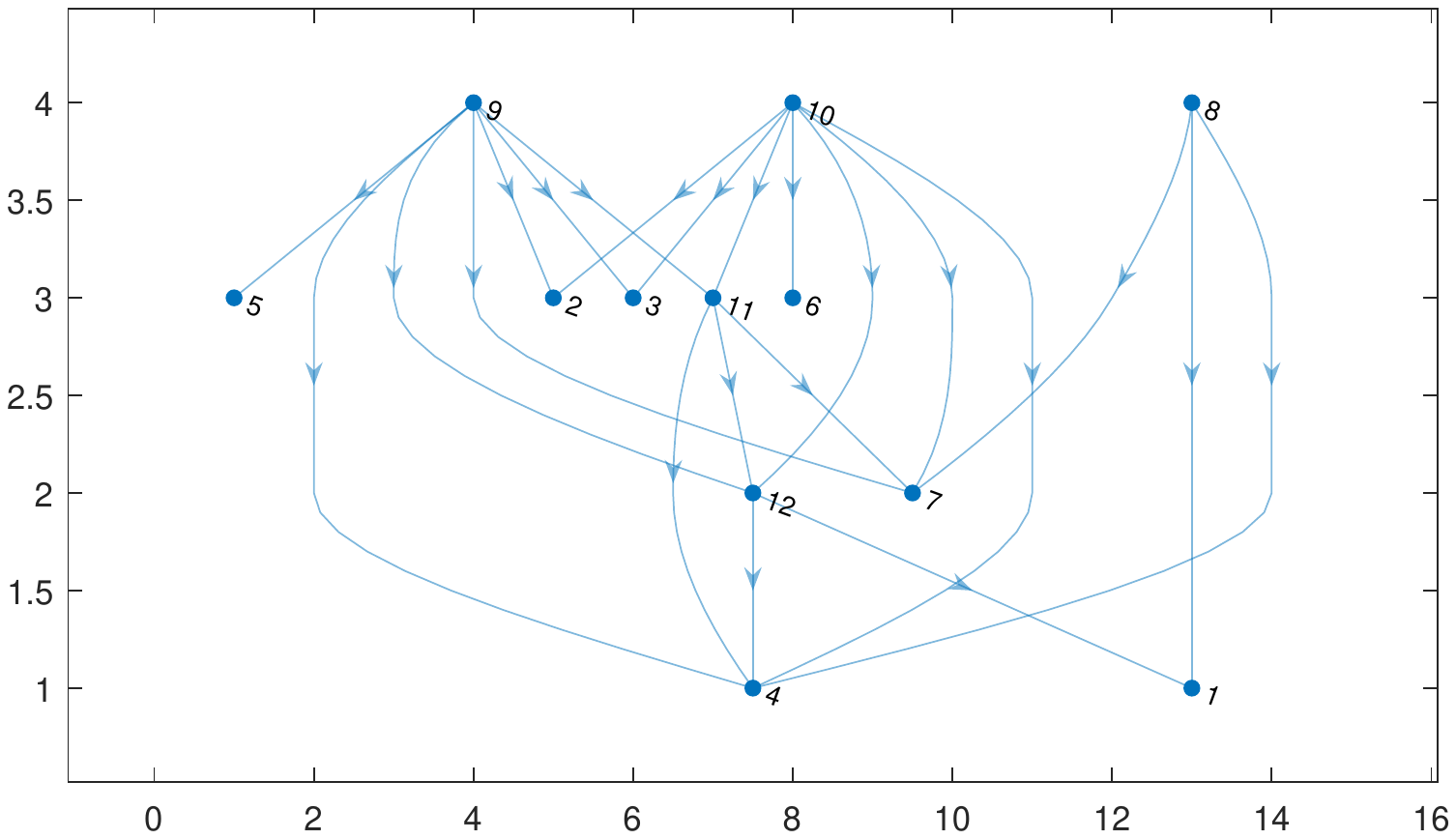}
\caption{The dependency graph of the sorting station.\label{fig:sorting}}
\end{figure}

Figure~\ref{fig:sorting} shows the dependency graph of the sorting station. In this workstation, plant models 1 through 7 are sensor \hl{automata}, plant models 8 through 11 are actuator \hl{automata} and plant model 12 contains both controllable and uncontrollable events. This graph already indicates that Theorem~\ref{thm:nosynthesis} does not apply: there are edges (representing requirements) that have a non-sensor \hl{automaton} as a terminal vertex. In particular, plant models 11 and 12 have both incoming and outgoing edges, which indicates a violation of Property 3.g of the \textbf{CNMS} properties. Fortunately, as the model satisfies the \textbf{RCNMS} properties and the control dependency graph is acyclic, Theorem~\ref{thm:nosynthesisrelaxed} applies. Therefore, synthesis can be skipped.

\subsection{Production line}
\begin{figure}
\center
\includegraphics[width=\linewidth]{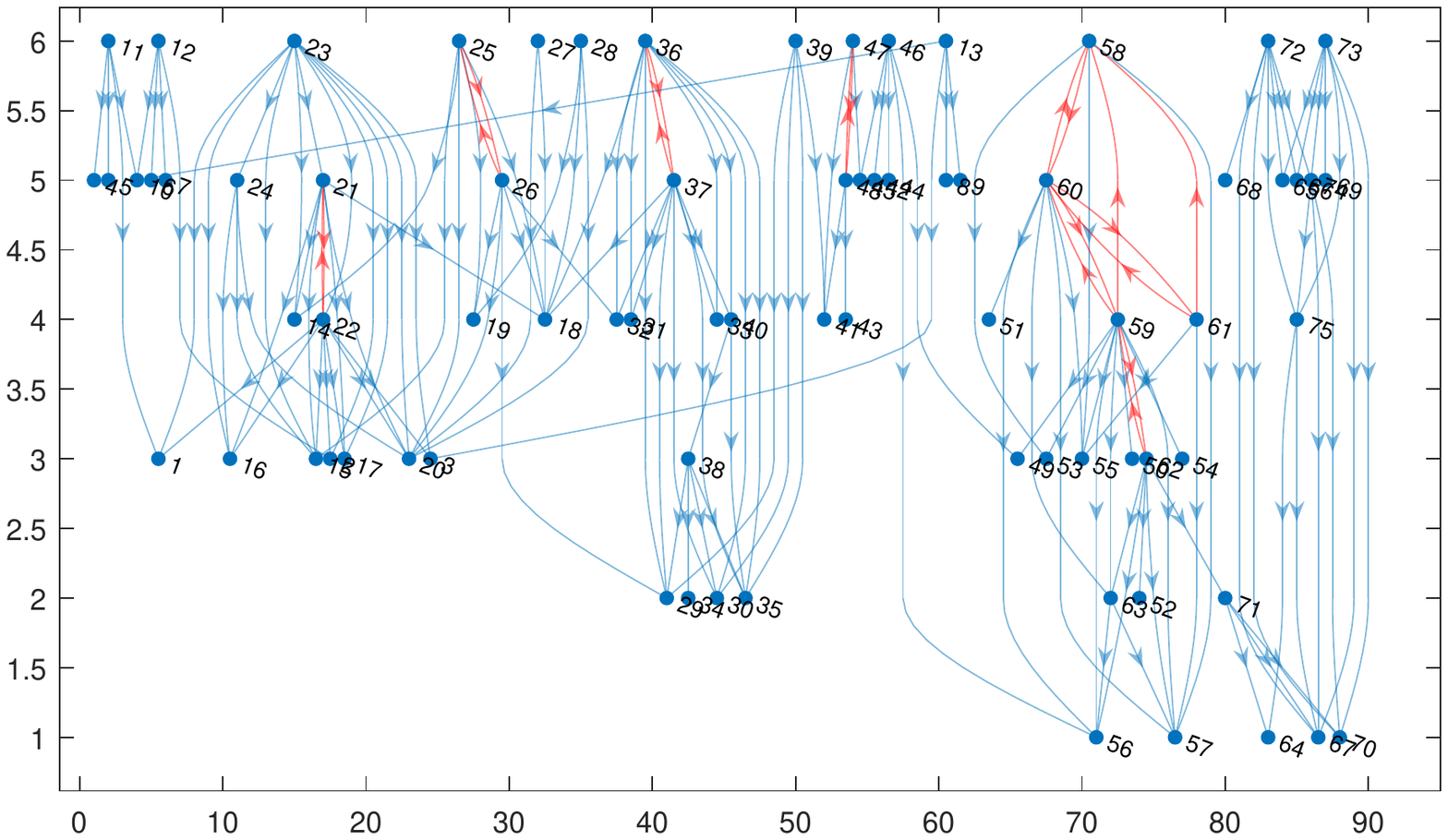}
\caption{The dependency graph of the complete production line. Red \hlr{color} indicates cycles.\label{fig:completefesto}}
\end{figure}

Figure~\ref{fig:completefesto} shows the dependency graph of the complete production line. Cycles in this graph are indicated in red. Clearly, both Theorems~\ref{thm:nosynthesis} and~\ref{thm:nosynthesisrelaxed} are not applicable for the control problem of the complete production line.

With the help of Theorem~\ref{thm:cyclicgraphs}, the problem of synthesizing a monolithic supervisor can be reduced to analyzing smaller control problems based on the identified cycles. In the dependency graph, five strongly connected components containing cycles \hlrr{can be identified}: $\phi_1 = \{P_{21}, P_{22}\}, \phi_2 = \{P_{25}, P_{26}\}$, $\phi_3 = \{P_{36}, P_{37}\}$, $\phi_4 = \{P_{47}, P_{48}\}$, and $\phi_5 = \{P_{58}, P_{59}, P_{60}, P_{61}, P_{62}\}$. Next, these sets need to be extended to include all plant models from which there exists a path to one of the plants in that particular strongly connected component. This is only the case for $\phi_1$, as from $P_{23}$ there exists a path from $P_{23}$ to $P_{21}$ (and $P_{22}$). Therefore, $V_{\phi_1} = \{P_{21}, P_{22}, P_{23}\}$, while $V_{\phi_2} = \phi_2$, $V_{\phi_3} = \phi_3$, $V_{\phi_4} = \phi_4$, and $V_{\phi_5} = \phi_5$. In this case, there is no overlap between these extended sets, so $W_i = V_{\phi_i}$ for $i\in [1,5]$.

Finally, five supervisors, $S_1,\ldots, S_5$ are synthesized, one for each simplified partial control problem represented by $\bigcup_{V_{\phi_i}\in W} V_{\phi_i}$. From Theorem~\ref{thm:cyclicgraphs} it follows that $\mathcal{P}\parallel \mathcal{R} \parallel S_1\parallel S_2\parallel S_3\parallel S_4\parallel S_5$ is a modular, controllable, nonblocking, and maximally permissive supervisor for the production line.

\begin{table}
\renewcommand{\arraystretch}{1.5}
\vspace{0.5em}
\caption{Results of supervisory control synthesis for the production line.}
\label{tab:results}
\centering
{\tabulinesep=1.2mm
\begin{tabu}{|X[1,l,m]|X[1,r,m]|X[1,r,m]|X[1,r,m]|}
  \tabucline-
  \rowfont[l]{}
  Model & \raggedright  Uncontrolled state-space size & Controlled state-space size & Synthesis duration [s]\\ \tabucline[1.5pt]-
  Monolithic & $5.9\cdot 10^{26}$ & $2.2\cdot 10^{25}$ & 370 \\ \tabucline-
  $S_1$ & 8 & 6 & $<1$ \\ \tabucline-
  $S_2$ & 4 & 3 & $<1$ \\ \tabucline-
  $S_3$ & 4 & 3 & $<1$ \\ \tabucline-
  $S_4$ & 6 & 6 & $<1$ \\ \tabucline-
  $S_5$ & 512 & 76 & $<1$ \\
  \tabucline-
\end{tabu}}
\end{table}

Table~\ref{tab:results} shows the results of applying Theorem 3 on the production line model. For each control problem solved, the uncontrolled and controlled state-space size is provided. The control problems for synthesizing automaton-based supervisors $S_1,\ldots,S_5$ are tiny compared to monolithic synthesis, i.e., obtaining these supervisors can be done even manually. In future research, a full experimental analysis of potential computational effort reduction with respect to other synthesis algorithms can be performed. Inspecting the synthesized supervisors confirms the observation from Section~\ref{sect:casedescription} that no additional restrictions are imposed to ensure controllable and nonblocking behavior.

\section{\hlr{Case Study 2: Roadway tunnel}}\label{sect:tunnel}

In this section, the applicability of the proposed method \hlrr{is demonstrated} on an industrial large-scale system. For this demonstration, the case study of synthesizing a supervisory controller for the `Eerste Heinenoord Tunnel', a tunnel located south of Rotterdam, the Netherlands, \hlrr{is used}. The \hlr{model of this system is described} in~\cite{moormann_design_2020}.

\subsection{Case description}

\begin{figure}
\center
\includegraphics[width=\linewidth]{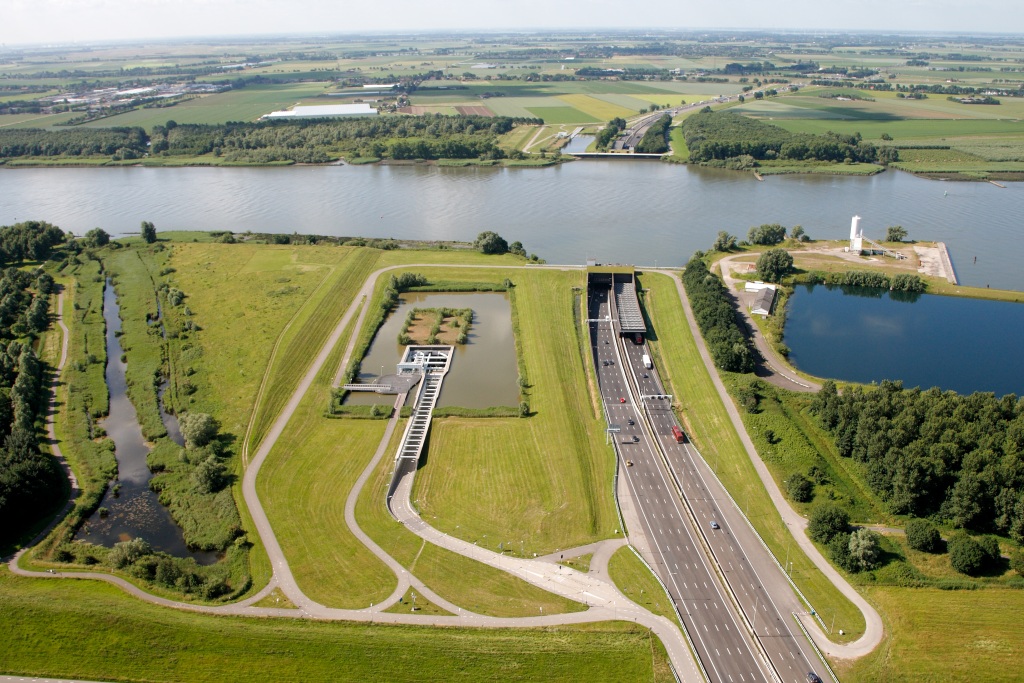}
\caption{The Eerste Heinenoord Tunnel (right) and the Tweede Heinenoord Tunnel (left). Image from https://beeldbank.rws.nl, Rijkswaterstaat.\label{fig:tunnel}}
\end{figure}

Nowadays, each tunnel is equipped with a supervisory controller that ensures correct cooperation between the tunnel subsystems, such as ventilation, lighting, boom barriers, and emergency detection sensors. For example, when an emergency is detected by several sensors, the supervisor has to automatically close off the tunnel for traffic. Figure~\ref{fig:tunnel} shows the `Eerste Heinenoord Tunnel' (EHT) on the right and the `Tweede Heinenoord Tunnel' (THT) on the left. The EHT is a two-tube roadway tunnel, which was initially opened in 1969. The THT, which was added in 1999, is only accessible for slow traffic such as cyclists and agricultural traffic. Rijkswaterstaat, the executive body of the Dutch ministry of Infrastructure and Water Management, is currently in the preparation and planning phase of renovating the EHT. In the renovation project, both the physical tunnel components and the tunnel supervisory controller are being renewed.

The model of the EHT in~\cite{moormann_design_2020} contains 540 plant models and 1668 requirement models, which can be accessed at a GitHub repository\footnote{\url{https://github.com/magoorden/NonblockingModularSupervisors}}. This large number of component models results in the uncontrolled state-space size of $1.87\cdot 10^{226}$, for which a monolithic supervisor can no longer be calculated by the CIF tooling~\citep{beek_cif_2014}.

\subsection{Results}
\begin{figure}
\center
\includegraphics[width=\linewidth]{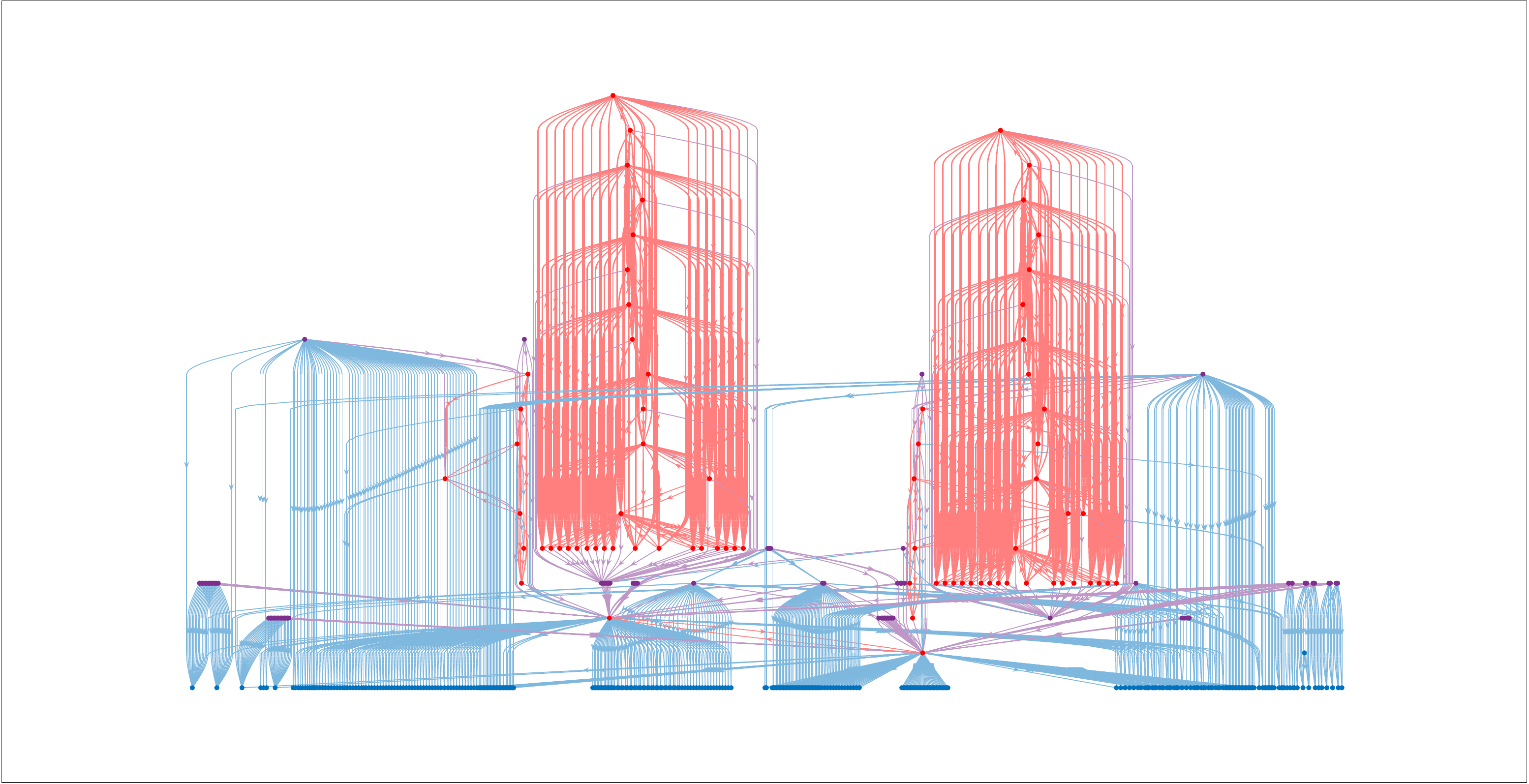}
\caption{The dependency graph of the EHT. Red \hlr{color} indicates the five strongly connected components, purple \hlr{color} the nodes and edges added in the extended strongly connected components, and blue \hlr{color} the nodes and edges that can be omitted from synthesis according to Theorem~\ref{thm:cyclicgraphs}.\label{fig:tunneldg}}
\end{figure}

The model of the EHT satisfies \textbf{RCNMS}, but it does not satisfy \textbf{CNMS}. Therefore, Theorem~\ref{thm:nosynthesis} does not apply. Figure~\ref{fig:tunneldg} shows the dependency graph of this model. Again, extended cycles are indicated in red in the figure. Since the dependency graph is cyclic, Theorem~\ref{thm:nosynthesisrelaxed} does not apply too. Therefore, Theorem~\ref{thm:cyclicgraphs} is used to reduce the synthesis problem.

With the help of Theorem~\ref{thm:cyclicgraphs}, instead of using the complete model as input for synthesis, the model can be significantly reduced. The dependency graph of the EHT model contains five strongly connected components, which transforms into one large subgraph of the five extended sets of vertices. Now, according to Theorem~\ref{thm:cyclicgraphs}, all blue vertices (and edges) can be removed before synthesis is started on the control problem represented by the red \hlr{and purple} edges and vertices.

\begin{table}
\renewcommand{\arraystretch}{1.5}
\vspace{0.5em}
\caption{Results of supervisory control synthesis for the EHT. Monolithic synthesis has been used.}
\label{tab:resultsEHT}
\centering
{\tabulinesep=1.2mm
\begin{tabu}{|X[1.5,l,m]|X[.9,r,m]|X[.9,r,m]|}
  \tabucline-
  \rowfont[l]{}
  Model & \raggedright  Original control problem & Reduced control problem \\ \tabucline[1.5pt]-
  Number of plant models & $492$ & $157$ \\ \tabucline-
  Number of requirement models & $1668$ & $1312$ \\ \tabucline-
  Uncontrolled state-space size & $1.87\cdot 10^{226}$ & $1.48\cdot 10^{87}$ \\ \tabucline-
  Controlled state-space size & - & $2.55\cdot 10^{81}$ \\ \tabucline-
  Synthesis duration [s] & - & 19.4  \\
  \tabucline-
\end{tabu}}
\end{table}

Table~\ref{tab:resultsEHT} shows the results of the analysis of the EHT. In the most-refined product representation, the EHT model contains $492$ plant models and $1668$ requirements. Theorem~\ref{thm:cyclicgraphs} reduces the synthesis problem to only $157$ plant models and $1312$ requirement models. This is a reduction of $68\%$ of the plant models and $21\%$ of the requirement models. Now the reduced model can be used as input for any synthesis method, e.g. monolithic, modular, and compositional synthesis, to obtain a supervisor. Monolithic synthesis \hlrr{is applied} to verify whether a supervisor can be synthesized for the reduced control problem without running into memory issues. For the reduced control problem, a monolithic supervisor can be synthesized in $19.4$ seconds. This shows that reducing the control problem is beneficial for synthesis.

\hl{As a subsequent experiment, multilevel synthesis~\citep{komenda_control_2016,goorden_structuring_2020} and compositional synthesis~\citep{mohajerani_framework_2014} are applied on the original model of the EHT. For multilevel synthesis, the implementation in CIF~\citep{beek_cif_2014} \hlrr{is used}; for compositional synthesis, the implementation in Supremica~\citep{malik_supremicaefficient_2017} \hlrr{is used}. Multilevel synthesis is able to synthesize supervisors on average in 220 seconds\footnote{With clustering settings of $\alpha=2, \beta=5.0$, and $\mu=2.0$.}. This is without performing a nonconflicting check on the synthesized supervisors. Both the monolithic BDD-based nonconflicting check in CIF and the compositional nonconflicting check in Supremica run out of memory (4GB available). Compositional synthesis is not able to synthesize a supervisor, because it runs out of memory (4GB available). This experiment shows that it is currently sometimes necessary to reduce the control problem before performing state-of-the-art synthesis algorithms on models of large-scale applications.}

\section{Conclusion}\label{sect:conclusion}

\hlr{This paper presents contributions} to determine, \hlr{based on model properties}, whether synthesis is unnecessary for a given set of \hlr{modular} plant models and requirement models, \hlr{building upon preliminary results presented in the conference paper of~\cite{goorden_no_2019}}. \hlll{These contributions result in the following \hllll{effective} \hlr{three-step} method for synthesizing supervisors. First, it is checked whether a control problem satisfies the \CNMS properties. If it does, then the synthesis step is altogether unnecessary: the plant and the requirement models already form a controllable, nonblocking, and maximally permissive supervisor. If not, the second step is to check whether the control problem satisfies the relaxed \RCNMS\ properties and to construct its dependency graph, \hlr{where vertices relate to the plant models and the edges to the requirement models}. If the dependency graph is acyclic, then the synthesis step is still unnecessary. If it has cycles, the third step is to reduce the original control problem to a collection of smaller partial control problems, one for each strongly connected component in the dependency graph.} \hlr{This results in modular supervisors which control the plant together.}

Two industrial cases studies demonstrate \hl{that} the method presented in this paper \hl{generates useful results in practice} \hlll{by significantly reducing the synthesis effort}. \hl{The tunnel case study even shows that model reduction is necessary, because state-of-the-art synthesis tools are not able to \hlll{provide} supervisors for the original model.}

The infrastructural systems encountered in the project with Rijkswaterstaat, like waterway locks~\citep{reijnen_supervisory_2017}, movable bridges~\citep{reijnen_supervisory_2019}, and tunnels~\citep{moormann_design_2020}, satisfy \textbf{RCNMS}. This is a motivation to further investigate the applicability of the proposed model properties and analysis method to systems from other domains, like manufacturing and automotive systems.

Future work also includes the identification of special cases to be able to conclude that the synthesis step is unnecessary for some of the partial control problems identified by the strongly connected components. Monolithic supervisors of the partial control problems of the production line case still indicate that the synthesis step is unnecessary, but it is yet unclear how this conclusion could be obtained without \hlr{having performed} synthesis. \hlr{Another direction is to investigate the applicability of the presented method if the supervisors are not required to be maximally permissive, i.e., if the goal is to synthesize controllable, and nonblocking supervisors that achieve something but not necessarily everything possible.}

\section*{Acknowledgment}
The authors thank Ferdie Reijnen for providing the models of the Festo production line and Lars Moormann for providing the models of the Eerste Heijnenoord Tunnel. Advice given by prof.em. Jan H. van Schuppen has been a great help in preparing the manuscript of this paper. The authors thank Han Vogel and Maria Angenent from Rijkswaterstaat for their valuable feedback and support.

This work is supported by Rijkswaterstaat, part of the Ministry of Infrastructure and Water Management of the Government of the Netherlands, and by the Swedish Science Foundation,
Vetenskapsr{\aa}det.


\bibliography{ref}

\appendix
\section{\hll{Proof of Theorem~\ref{thm:nosynthesis}}}\label{app:proofnosynthesis}
The proof of Theorem~\ref{thm:nosynthesis} as presented in this section originates from the conference proceedings~\citep{goorden_no_2019}. In order to prove that a control problem satisfying \textbf{CNMS} does not require synthesis (Theorem~\ref{thm:nosynthesis}), the following five lemmas \hlrr{are first proved}.

The first two lemmas show that when a plant model is provided as a product system and each individual automaton is trim or strongly connected, then the synchronous composition of these automata is also trim or strongly connected, respectively.

\begin{lemma}\label{lemma:istrim}
Let $\mathcal{P} = \{P_1,\ldots,P_m\}$ be a product system where each individual $P_i\in \mathcal{P}$ is trim. Then $P_1\parallel \ldots \parallel P_m$ is trim.
\end{lemma}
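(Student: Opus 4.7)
The plan is to prove reachability and coreachability separately for the composed automaton, exploiting the fact that a product system has pairwise disjoint alphabets, so each transition in the synchronous composition changes the state of exactly one component.

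First, I would establish a \emph{lifting lemma}: for any $P_i \in \mathcal{P}$ and any string $s \in \Sigma_{P_i}^*$, the transition $\delta_i(q_i, s) = q_i'$ in $P_i$ lifts to a transition in $\parallel \mathcal{P}$ from any composed state $(q_1,\ldots,q_i,\ldots,q_m)$ to $(q_1,\ldots,q_i',\ldots,q_m)$. This follows directly from Definition~\ref{def:synchronouscomposition}, since every event $\sigma \in \Sigma_{P_i}$ satisfies $\sigma \in \Sigma_{P_i} \setminus \Sigma_{P_j}$ for all $j \neq i$ (pairwise disjoint alphabets), so the second case of the transition function applies and only the $i$-th component is updated.

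Next, for reachability, pick an arbitrary state $(q_1,\ldots,q_m)$ of $\parallel \mathcal{P}$. Since each $P_i$ is trim, it is reachable, so there exists $s_i \in \Sigma_{P_i}^*$ with $\delta_i(q_{0,i}, s_i) = q_i$. Using the lifting lemma repeatedly, the concatenation $s_1 s_2 \cdots s_m$ is a valid string in $\parallel \mathcal{P}$ driving the initial state $(q_{0,1},\ldots,q_{0,m})$ to $(q_1,\ldots,q_m)$: the substring $s_1$ advances only the first component from $q_{0,1}$ to $q_1$, then $s_2$ advances only the second component, and so on. Coreachability is symmetric: from any state $(q_1,\ldots,q_m)$, use the coreachability of each $P_i$ to obtain $t_i \in \Sigma_{P_i}^*$ leading $q_i$ to some marked $q_{m,i} \in Q_{m,i}$, then lift and concatenate $t_1 \cdots t_m$ to reach $(q_{m,1},\ldots,q_{m,m})$, which is marked in $\parallel \mathcal{P}$ because $Q_{m,1\parallel\cdots\parallel m} = Q_{m,1} \times \cdots \times Q_{m,m}$ by Definition~\ref{def:synchronouscomposition} (applied associatively).

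There is no real obstacle here; the argument is essentially bookkeeping. The only subtle point worth stating explicitly is that the lifting relies critically on the product-system assumption: were the alphabets to overlap, a string valid in one component alone might force synchronization in another component where it is not enabled, and the concatenation trick would fail. Associativity and commutativity of $\parallel$ (noted just after Definition~\ref{def:synchronouscomposition}) can be used to justify proceeding either by direct argument on $m$ components or by a trivial induction on $m$.
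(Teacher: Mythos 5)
Your proof is correct and takes essentially the same route as the paper's own proof: the paper likewise proves reachability and coreachability separately by lifting each component string $s_i$ (valid in $P_i$ alone, thanks to the pairwise disjoint alphabets) to the composition while leaving all other components untouched, and then concatenating $s_1 s_2 \cdots s_m$. Your explicit statement of the lifting step as a small lemma is merely a presentational difference; the substance is identical.
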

\begin{proof}
Denote $P=P_1\parallel \ldots \parallel P_n$, $P=(Q,\Sigma,\delta,q_0,Q_m)$, and $P_i=(Q_i,\Sigma_i,\delta_i,q_{0,i},Q_{m,i})$. \hlrr{It is shown} that $P$ is reachable and coreachable.

Firstly, assume that $q = (q_1,\ldots,q_n)$ is a state in $P$. As each individual $P_i$ is trim, it follows that there exists a string $s_i\in\Sigma_i^*$ such that $\delta_i(q_{0,i},s_i) = q_i$. From the definition of synchronous composition and the fact that $\mathcal{P}$ is a product system, it follows that $\delta((r_1,\ldots,q_{0,i},\ldots,r_m),s_i)= (r_1,\ldots,q_i,\ldots,r_m)$ for any state $r_j\in Q_j, j\neq i$. Therefore, it holds that $\delta((q_{0,1},\ldots,q_{0,n}),s_1s_2\ldots s_n) = q$ in $P$. As the state $q$ is chosen arbitrarily, it follows that $P$ is reachable.

Secondly, assume again that $q = (q_1,\ldots,q_n)$ is a state in $P$. As each individual $P_i$ is trim, it follows that there exists a string $s_i\in\Sigma_i^*$ such that $\delta_i(q_i,s_i) = q_{i,k} \in Q_{m,i}$. From the definition of synchronous composition and the fact that $\mathcal{P}$ is a product system, it follows that $\delta((r_1,\ldots,q_i,\ldots,r_m),s_i)= (r_1,\ldots,q_{i,k},\ldots,r_m)$ for any state $r_j\in Q_j, j\neq i$. Therefore, it holds that $\delta(q,s_1s_2\ldots s_n) \in Q_m$ in $P$. As state $q$ is chosen arbitrarily, it follows that $P$ is coreachable.
\end{proof}

\begin{lemma}\label{lemma:isstronglyconnected}
Let $\mathcal{P} = \{P_1,\ldots,P_m\}$ be a product system where each individual $P_i\in \mathcal{P}$ is a strongly connected automaton. Then $P_1\parallel \ldots \parallel P_m$ is a strongly connected automaton.
\end{lemma}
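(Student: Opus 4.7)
The plan is to mirror the strategy used for Lemma~\ref{lemma:istrim}, exploiting the fact that in a product system the component alphabets are pairwise disjoint, so that a string over $\Sigma_i$ only moves the $i$-th component while leaving all the others fixed. Denote $P = P_1 \parallel \cdots \parallel P_m = (Q,\Sigma,\delta,q_0,Q_m)$ and $P_i = (Q_i,\Sigma_i,\delta_i,q_{0,i},Q_{m,i})$. To establish strong connectedness, I must show that for any two states $q = (q_1,\ldots,q_m)$ and $q' = (q'_1,\ldots,q'_m)$ of $P$, there exists a string $s \in \Sigma^*$ with $\delta(q,s) = q'$.

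First I would invoke the strong connectedness of each component: for every $i \in [1,m]$ there exists $s_i \in \Sigma_i^*$ such that $\delta_i(q_i, s_i) = q'_i$. Next, from Definition~\ref{def:synchronouscomposition} together with the product system assumption $\Sigma_i \cap \Sigma_j = \emptyset$ for $i \neq j$, an induction on the length of $s_i$ shows the key ``locality'' property: for any tuple of states $(r_1,\ldots,r_m)$ with $r_i = q_i$,
\begin{equation*}
\delta\bigl((r_1,\ldots,r_{i-1},q_i,r_{i+1},\ldots,r_m),\, s_i\bigr) = (r_1,\ldots,r_{i-1},q'_i,r_{i+1},\ldots,r_m).
\end{equation*}

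With this locality in hand, I would concatenate the strings and argue by induction on $i$ that
\begin{equation*}
\delta(q, s_1 s_2 \cdots s_i) = (q'_1,\ldots,q'_i,q_{i+1},\ldots,q_m),
\end{equation*}
so that taking $i = m$ yields $\delta(q, s_1 s_2 \cdots s_m) = q'$. Since $q$ and $q'$ were arbitrary, this establishes that $P$ is strongly connected.

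The only subtle point, and thus the mild obstacle, is justifying the locality property above: one needs to verify carefully from Definition~\ref{def:synchronouscomposition} that since each $\sigma \in \Sigma_i$ lies in no other $\Sigma_j$, the third branch of the case distribution applies at every step and the other coordinates are therefore preserved. Everything else is bookkeeping, essentially identical in structure to the reachability argument in Lemma~\ref{lemma:istrim}.
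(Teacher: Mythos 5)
Your proposal is correct and follows essentially the same route as the paper's proof: obtain a string $s_i \in \Sigma_i^*$ from the strong connectedness of each component, use the pairwise-disjoint alphabets to show that $s_i$ moves only the $i$-th coordinate, and concatenate $s_1 s_2 \cdots s_m$ to drive $q$ to $q'$. The only difference is that you spell out the locality property and the concatenation as explicit inductions (and note, as a small imprecision, that whether the second or third branch of Definition~\ref{def:synchronouscomposition} fires depends on where $P_i$ sits in the nested composition), whereas the paper asserts these steps directly from the definition.
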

\begin{proof}
Denote $P=P_1\parallel \ldots \parallel P_n$, $P=(Q,\Sigma,\delta,q_0,Q_m)$, and $P_i=(Q_i,\Sigma_i,\delta_i,q_{0,i},Q_{m,i})$. \hlrr{It is shown} that for any two states $x = (x_1,\ldots,x_m)\in Q ,y = (y_1,\ldots,y_m)\in Q$ there exists a string $s\in\Sigma^*$ such that $\delta(x,s) = y$.

As each individual $P_i$ is strongly connected, it follows that there exists a string $s_i\in\Sigma_i^*$ such that $\delta_i(x_i,s_i) = y_i$. From the definition of synchronous composition and the fact that $\mathcal{P}$ is a product system, it follows that $\delta((r_1,\ldots,x_i,\ldots,\allowbreak r_m), s_i)= (r_1,\ldots,y_i,\ldots,r_m)$ for any state $r_j\in Q_j, j\neq i$. Therefore, it holds that $\delta(x,s_1s_2\ldots s_n) = y$ in $P$. As states $x$ and $y$ are chosen arbitrarily, it follows that $P$ is a strongly connected automaton.
\end{proof}

The following lemma expresses that when a control problem with a single requirement satisfies \textbf{CNMS}, then always eventually a state \hlrr{can be reached} such that the condition of this requirement evaluates to true, thus enabling the guarded event.

\begin{lemma}\label{lemma:makeconditiontrue}
Let $(\mathcal{P},\{R\})$ be a control problem with a single requirement satisfying \textbf{CNMS}. Denote $R=e\textbf{ needs } C$. Then, from any state $q$, there exists a string $s\in\Sigma^*$ such that a state $r$ is reached and $C(r) = \textbf{T}$.
\end{lemma}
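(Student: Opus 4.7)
The plan is to exploit the disjunctive normal form structure of $C$ together with the product-system and strong-connectedness assumptions to drive the relevant sensor automata, one by one, into states that make some disjunct of $C$ evaluate to true.

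First, I would dispose of the trivial cases. If $C$ is equivalent to $\textbf{T}$, then $C(q) = \textbf{T}$ already and the empty string suffices. Otherwise, write $C = D_1 \vee D_2 \vee \cdots \vee D_k$ in DNF, and pick any one conjunction, say $D_1 = \ell_1 \wedge \cdots \wedge \ell_n$. By Properties 3.d--3.g, each literal $\ell_i$ is either $P_i.q_i$ or $\neg P_i.q_i$ for a \emph{sensor} automaton $P_i \in \mathcal{P}$, and by Property 3.e the plant automata $P_1,\ldots,P_n$ appearing in $D_1$ are pairwise distinct. So to make $D_1$ evaluate to true at some global state $r$, I only need to specify, for each $i$, a \emph{target} state $t_i \in Q_{P_i}$: if $\ell_i = P_i.q_i$ take $t_i := q_i$; if $\ell_i = \neg P_i.q_i$ then Property 3.f guarantees $|Q_{P_i}| \geq 2$, so pick any $t_i \in Q_{P_i} \setminus \{q_i\}$.

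Next, I would construct the string $s$ by concatenating local strings. Write $q = (q_1',\ldots,q_m')$ as the current global state in $\parallel \mathcal{P}$. For each $i \in \{1,\ldots,n\}$, Property 2 gives that $P_i$ is strongly connected, so there exists $s_i \in \Sigma_{P_i}^*$ with $\delta_{P_i}(q_i', s_i) = t_i$. Since $\mathcal{P}$ is a product system (Property 1), the alphabets $\Sigma_{P_1},\ldots,\Sigma_{P_n}$ are pairwise disjoint and also disjoint from the alphabets of the other plants in $\mathcal{P}$. By the same computation used in the proofs of Lemmas~\ref{lemma:istrim} and~\ref{lemma:isstronglyconnected}, firing the concatenation $s := s_1 s_2 \cdots s_n$ from $q$ leaves every component $P_j$ with $j \notin \{1,\ldots,n\}$ untouched and drives each $P_i$ from $q_i'$ into $t_i$. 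Call the resulting global state $r$. By construction every literal of $D_1$ is satisfied at $r$, hence $C(r) = \textbf{T}$.

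The main obstacle, which is really just a bookkeeping subtlety, is making the evaluation $C(r) = \textbf{T}$ precise in the notation of Definition~\ref{def:syncompreq}: one must check that substituting $\textbf{T}$ for exactly those state references matching components of $r$, and $\textbf{F}$ for the rest, produces a tautologous disjunct. This is straightforward with the target-state choice above, but it is the one place where Property 3.f (ensuring $\neg P.q$ is never forced on a single-state automaton) is genuinely needed, since otherwise the chosen literal would be unsatisfiable and no target $t_i$ would exist. Everything else follows directly from strong connectedness and the disjoint-alphabet property of product systems.
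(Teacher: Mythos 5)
Your proposal is correct and takes essentially the same approach as the paper's proof: both pick a single conjunction of the DNF, choose one target state per referenced sensor automaton (with Property 3.e ensuring no conflicting references and Property 3.f making negative literals satisfiable), and then reach the resulting global state using strong connectedness of the components together with the disjoint alphabets of the product system. The only cosmetic difference is that the paper factors the reachability step through Lemma~\ref{lemma:isstronglyconnected} applied to $\parallel(\mathcal{P}\setminus\{P_k\})$, explicitly noting that $R$ cannot restrict this composition because the guarded event $e$ lives in the excluded component $P_k$, whereas you inline the string-concatenation argument; your construction is equally valid since it uses only uncontrollable sensor events, which $R$ never disables.
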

\begin{proof}
As $\mathcal{P}$ is a product system (Property 1), there is only a single plant component $P_k$ such that $e\in\Sigma_k$. From the combination of Properties 3.b, 3.d, and 3.g, it follows that plant component $P_k$ is not used in condition $C$, as it has to be an actuator model. Therefore, the state of $P_k$ does not matter.

Furthermore, observe that $\mathcal{P}\setminus\{P_k\}\neq\emptyset$ and $\parallel (\mathcal{P}\setminus\{P_k\})=\parallel (\mathcal{P}\setminus\{P_k\})\parallel R$. From Property 2 and Lemma~\ref{lemma:isstronglyconnected} it follows that $\parallel (\mathcal{P}\setminus\{P_k\})$ is a strongly connected automaton, thus $\parallel (\mathcal{P}\setminus\{P_k\}) \parallel R$ is also a strongly connected automaton. Therefore, if there exists a state $r$ that satisfies $C$, i.e., $C(r)=\textbf{T}$, then there also exists a string $s\in\Sigma^*$ such that $\delta(q,s) = r$. So it remains to be proven that such a state $r$ exists.

As $C$ is in disjunctive normal form (Property 3.d), it follows that if $r$ satisfies $C$, it satisfies one of the conjunctions. From Properties 3.e and 3.g \hlrr{it is known} that there is at most one reference to each $P_i \in \mathcal{P}\setminus\{P_k\}$ in each conjunction. If there is no reference to $P_i$, then all states of this automaton satisfy this conjunction. If $P_i$ is mentioned in this conjunction, then, from Properties 3.d and 3.f, there exists at least one state $q_i\in Q_i$ that satisfies this conjunction. Thus there exists a state $r$ such that $C$ is satisfied.
\end{proof}

Now the following two lemmas \hlrr{are proven}: the first one shows that under the given conditions, synthesis \hlrr{is not needed to be performed} locally, and the second one shows that under the given conditions the supervisors are globally nonblocking. In the rest of this section, the notation $\sup\mathit{CN}(P,R)$ is the function that constructs the controllable, nonblocking, and maximally permissive supervisor given plant $P$ and requirement $R$.

\begin{lemma}\label{thm:indsupervisorisplantreq}
Let $(\mathcal{P},\mathcal{R})$ be a control problem satisfying \textbf{CNMS}. For each $R_j\in\mathcal{R}$, $P\parallel R_j$ is a controllable, nonblocking, and maximally permissive supervisor for plant $P=\parallel \mathcal{P}$ and requirement $R_j$.
\end{lemma}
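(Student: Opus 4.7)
The plan is to establish the three required properties (controllability, nonblockingness, and maximal permissiveness) of $P\parallel R_j$ separately, where $R_j = \mu\textbf{ needs }C$, and where $P_k$ denotes the unique plant component with $\mu\in\Sigma_{P_k}$ (unique by Property~1). Throughout, the product-system structure is the driving force: by Definition~\ref{def:syncompreq}, synchronous composition with $R_j$ only disables $\mu$ at states where $C$ evaluates to false, and by Property~3.g every plant referenced in $C$ is a sensor automaton, hence $P_k$ itself does not appear in $C$. Controllability is then immediate: Property~3.b gives $\mu\in\Sigma_c$, so no uncontrollable event is ever disabled and therefore $P\parallel R_j$ is controllable with respect to $P$.

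For nonblockingness, the plan is to show that from any state $q=(q_1,\ldots,q_m)$ of $P\parallel R_j$ one can construct a path to a marked state $q^m\in Q_{m,1}\times\cdots\times Q_{m,m}$ in three stages. (i) Using only events from the components $P_j$ with $j\neq k$ (which are not affected by $R_j$), reach via Lemma~\ref{lemma:makeconditiontrue} applied to the product subsystem $\mathcal{P}\setminus\{P_k\}$ a state at which $C$ is satisfied, without changing the $k$-th component. (ii) Follow a path in $P_k$ from $q_k$ to some $q_k^m\in Q_{m,k}$, which exists by Property~2; since events in $\Sigma_k$ leave all other components unchanged and $C$ depends only on those other components, $C$ stays true throughout stage~(ii), so every $\mu$-transition along this path is enabled in $P\parallel R_j$. (iii) Finally, use events from the components $P_j$, $j\neq k$, to move each to a marked state $q_j^m\in Q_{m,j}$; this is possible by Property~2 together with Lemma~\ref{lemma:isstronglyconnected}, and does not involve $\mu$ and does not disturb $q_k^m$. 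Concatenating these three segments yields a path in $P\parallel R_j$ ending in a marked state.

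Maximal permissiveness then follows almost immediately: by Definition~\ref{def:syncompreq}, $P\parallel R_j$ disables $\mu$ exactly at those states where $C$ is false and disables nothing else, so any safe supervisor $S'$ satisfying $\mathcal{L}(P\parallel S')\subseteq\mathcal{L}(P\parallel R_j)$ cannot permit more behavior; since $P\parallel R_j$ has itself just been shown controllable and nonblocking, it coincides with the supremal controllable and nonblocking sublanguage and is therefore maximally permissive. The main obstacle is stage~(ii) of the nonblocking argument, where one must verify that $C$ remains true throughout a walk in $P_k$ that may traverse $\mu$-transitions. This is precisely where Property~1 (product system) and Property~3.g (no reference to non-sensor automata, in particular no reference to $P_k$, in $C$) are jointly essential: together they guarantee that neither a $\mu$-transition nor any other $\Sigma_k$-event can falsify $C$ once it has been made true in stage~(i).
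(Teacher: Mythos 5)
Your proof is correct and takes essentially the same route as the paper's: controllability immediately from Property~3.b, nonblockingness by combining Lemma~\ref{lemma:makeconditiontrue} with the product structure and the trimness/strong connectivity of the components not restricted by $R_j$, and maximal permissiveness from $P\parallel R_j$ being the safety upper bound. The only differences are cosmetic refinements: you establish $C$ once and observe that it persists under $\Sigma_k$-events (the paper instead re-invokes Lemma~\ref{lemma:makeconditiontrue} before each occurrence of the restricted event along the path in $P_k$), and you spell out the maximal-permissiveness step that the paper leaves implicit.
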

\begin{proof}
In the case that $\mathcal{R}=\emptyset$, no supervisor is synthesized. It follows from Properties 1 and 2 and Lemma~\ref{lemma:istrim} that $P$ is trim, so there is indeed no need for a supervisor. In the remainder of the proof \hlrr{it is assumed} that $\mathcal{R}\neq\emptyset$.

For each individual supervisor $P\parallel R_j$ \hlrr{it is shown below} that $P\parallel R_j$ is controllable with respect to plant $P$ and that $P\parallel R_j$ is nonblocking. The fact that $P\parallel R_j$ is controllable follows directly from Property 3.b. It remains to be proven that $P\parallel R_j$ is nonblocking. From Property 3.a \hlrr{it follows that} an event $e_j=\mathit{event}(R_j)$ \hlrr{is} associated with this requirement $R_j$. As $\mathcal{P}$ is a product system (Property 1), there is only a single plant component $P_k$ such that $e_j\in\Sigma_k$. Now the set of plant component models \hlrr{is partitioned} into $\{P_k\}$, $P_{\mathit{sm}} = \{ P_i\in \mathcal{P}\ |\ P_i \text{ is a sensor model}\}$, and $P_o = \mathcal{P}\setminus(\{P_k\}\cup P_{\mathit{sm}})$. Observe that the behavior of the plant components in $P_{\mathit{sm}}$ and $P_o$ is not restricted by requirement $R_j$, so Lemmas~\ref{lemma:istrim} and~\ref{lemma:isstronglyconnected} apply to the sets $P_{\mathit{sm}}$, $P_o$, and $P_{\mathit{sm}}\cup P_o$, i.e, $P_{\mathit{sm}}\parallel R_j$, $P_o\parallel R_j$, and $(P_{\mathit{sm}}\cup P_o)\parallel R_j$ are all trim and strongly connected automata.

To show that $P\parallel R_j$ is nonblocking, \hlrr{it is shown next} that for each reachable state $q$ there exists a string $s\in\Sigma^*$ such that a marked state $q_m\in Q_m$ can be reached. Consider automaton $P_k$ with current state $q_k$. As automaton $P_k$ is trim (Property 2), there exists a path labeled with string $s_k\in\Sigma_k^*$ by which a state $q_{m,k}\in Q_{m,k}$ can be reached from state $q_k$. \hlrr{It is shown} that this path is still possible under the influence of requirement $R_j$, i.e., it is still a path in $P_k\parallel R_j$. Consider two cases for this path.
\begin{itemize}
    \item If $s_k$ does not contain event $e_j$, then the path labeled with $s_k$ is trivially possible in $P_k\parallel R_j$.
    \item If $s_k$ contains event $e_j$, then requirement $R_j$ may remove event $e_j$ from the enabled event set and prevents $P_k\parallel R_j$ from reaching a marked state. For each transition labeled with event $e_j$, Lemma~\ref{lemma:makeconditiontrue} \hlrr{expresses} that there exists a path in $P$ reaching a state $r$ such that $C(r) = \textbf{T}$. Therefore, there always exists a path in $P$ such that $e_j$ is enabled. Thus, the path labeled with $s_k$ is still possible in $P_k\parallel R_j$.
\end{itemize}
Combining the above observation for $s_k$ and the fact that $(P_{\mathit{sm}}\cup P_o)\parallel R_j$ is trim, \hlrr{it follows} that a string $s$ exists by which a marked state $q_m$ is reached from state $q$. As $q$ is arbitrarily chosen, it follows that $P\parallel R_j$ is nonblocking.
\end{proof}

\begin{lemma}\label{thm:modulararenonconflicting}
Let $(\mathcal{P},\mathcal{R})$ be a control problem satisfying \textbf{CNMS}. Construct the set of modular supervisors $\mathcal{S} = \{S_1,\ldots,S_n\}$ such that each supervisor $S_j = \sup\mathit{CN}(P,R_j)$ is the controllable, nonblocking, and maximally permissive supervisor for plant $P=P_1\parallel\ldots\parallel P_m$ and requirement $R_j\in R$. Then $\mathcal{S}$ is nonconflicting.
\end{lemma}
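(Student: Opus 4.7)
The plan is to reduce the nonconflicting claim to showing that $P \parallel \mathcal{R}$ is nonblocking, and then adapt the reasoning already used in Lemma~\ref{thm:indsupervisorisplantreq} from the one-requirement case to the simultaneous action of all requirements. First I would observe that by Lemma~\ref{thm:indsupervisorisplantreq} each supervisor satisfies $S_j = P \parallel R_j$. Using associativity and commutativity of synchronous composition, and the fact that composing the plant $P$ with itself is equivalent in behaviour to taking $P$ once, $\parallel_{j=1}^n S_j$ has the same reachable-and-marked behaviour as $P \parallel R_1 \parallel \cdots \parallel R_n = P \parallel \mathcal{R}$. Hence $\mathcal{S}$ is nonconflicting iff $P \parallel \mathcal{R}$ is nonblocking.

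Next I would partition the plant components into the sensor automata $\mathcal{P}_{\mathit{sm}} = \{P_i \in \mathcal{P} \mid \Sigma_{P_i} \subseteq \Sigma_u\}$ and the remaining (non-sensor) components $\mathcal{P}_o$. By Property 3.b, no requirement constrains any event of a sensor automaton; by Property 3.g, conditions only reference states of sensor automata. In particular, sensors move freely in $P \parallel \mathcal{R}$, and changing a sensor state does not affect any non-sensor plant since $\mathcal{P}$ is a product system. By Lemma~\ref{lemma:isstronglyconnected}, $\parallel \mathcal{P}_{\mathit{sm}}$ is strongly connected, and by the product-system property it has a marked joint state (take any marked state of each sensor, which exists by Property 2).

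Given an arbitrary reachable state $q = (q_1, \ldots, q_m)$ in $P \parallel \mathcal{R}$, I would construct a string leading to a globally marked state by processing the non-sensor plants one at a time. For each $P_k \in \mathcal{P}_o$, its strong connectivity and the presence of a marked state (Property 2) yield a local path $s_k \in \Sigma_{P_k}^*$ from the current $P_k$-state to a marked $P_k$-state. For every controllable event $e$ along $s_k$ that is guarded by some requirement $R_j = e \textbf{ needs } C_j$ (Property 3.c guarantees at most one such $R_j$), I invoke Lemma~\ref{lemma:makeconditiontrue} to obtain a string in $\Sigma_{\mathcal{P}_{\mathit{sm}}}^*$ that drives the sensors, without affecting any non-sensor plant, into a state satisfying $C_j$, enabling $e$. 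Splicing such sensor-only strings into $s_k$ at the appropriate positions yields a valid string in $P \parallel \mathcal{R}$ that brings $P_k$ to a marked state while leaving all other non-sensor components unchanged. Concatenating the resulting strings for all $P_k \in \mathcal{P}_o$ puts every non-sensor plant in a marked state; a final sensor-only string, existing by strong connectivity of $\parallel \mathcal{P}_{\mathit{sm}}$, brings the sensors to their marked tuple. The resulting global state lies in $Q_m$, so $P \parallel \mathcal{R}$ is nonblocking, which gives the claim.

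The main obstacle is making precise the non-interference between the various local manipulations: enabling an event $e$ in $P_k$ might require sensor states that differ from those later needed to enable some event $e'$ in $P_{k'}$, and inserting sensor prefixes must not disturb plants already driven to marked states. This is where the product-system assumption combined with Property 3.b is decisive: sensor transitions commute (in the product-automaton sense) with all non-sensor transitions and are never disabled by any requirement, so sequential re-adjustment of the sensor component is always possible without undoing earlier progress in $\mathcal{P}_o$. Once this independence is formalized, the inductive concatenation argument goes through cleanly.
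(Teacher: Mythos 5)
Your proposal is correct and follows essentially the same route as the paper's proof: it reduces nonconflictingness to nonblockingness of $P \parallel \mathcal{R}$ via Lemma~\ref{thm:indsupervisorisplantreq}, and then exploits the product-system structure together with Properties 3.b and 3.g to insert sensor-only strings (Lemma~\ref{lemma:makeconditiontrue}) that enable the guarded events along local paths to marked states. The only cosmetic differences are that the paper partitions the plant models into three sets (sensor, restricted, and other models) rather than your two, handling the unrestricted non-sensor plants directly via Lemmas~\ref{lemma:istrim} and~\ref{lemma:isstronglyconnected}, while you fold them into the same one-plant-at-a-time splicing argument.
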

\begin{proof}
For $\mathcal{S}$ to be nonconflicting, it should hold that $S_1\parallel \ldots\parallel S_n$ is nonblocking. From Lemma~\ref{thm:indsupervisorisplantreq} it follows that each $S_j = P\parallel R_j$. Therefore, $S_1\parallel \ldots\parallel S_n = (P\parallel R_1)\parallel\ldots\parallel (P\parallel R_n) = P\parallel R_1\parallel \ldots\parallel R_n$. Partition the set of plant models $\mathcal{P}$ into the set of sensor models $P_{\mathit{sm}} = \{ P_i\in \mathcal{P}\ |\ P_i \text{ is a sensor model}\}$, the set of restricted models $P_r = \{P_i\in \mathcal{P}\ |\ \exists R_j\in \mathcal{R} \text{ s.t. } \mathit{event}(R_j) \in \Sigma_i\}$, and the other plant models $P_o = \mathcal{P}\setminus(P_{\mathit{sm}}\cup P_r)$.

Clearly, no plant model in $P_o$ is affected by the requirements, so Lemmas~\ref{lemma:istrim} and~\ref{lemma:isstronglyconnected} apply, i.e., $P_o\parallel \mathcal{R}$ is a trim and strongly connected automaton. Furthermore, from Property 3.b and the definition of a sensor model it follows that also no plant model in $P_{\mathit{sm}}$ is affected by the requirements, thus by Lemmas~\ref{lemma:istrim} and~\ref{lemma:isstronglyconnected} it follows that $P_{\mathit{sm}}\parallel \mathcal{R}$ is a trim and strongly connected automaton. Again using Lemmas~\ref{lemma:istrim} and~\ref{lemma:isstronglyconnected} yields that $P_o\parallel P_{\mathit{sm}}\parallel \mathcal{R}$ is a trim and strongly connected automaton.

For $P_o\parallel P_{\mathit{sm}} \parallel P_r \parallel \mathcal{R}$ to be nonblocking, it should hold that from every reachable state $q\in Q$ there exists a string $s\in\Sigma^*$ such that $\delta(q,s)\in Q_m$. As $P_r$ is trim (Lemma~\ref{lemma:istrim}) it follows that there exists a string $s_r\in\Sigma_r^*$ such that $\delta(q_r,s_r)\in Q_m$ in $P_r$. For $\delta(q_r,s_r)\in Q_m$ in $P_r\parallel \mathcal{R}$ to exist, each event in $s_r$ should be enabled along its path. There are two cases for each event $\sigma$ in string $s_r$ to consider following Definition~\ref{def:syncompreq} of synchronous composition with a state-event requirement.
\begin{itemize}
    \item If there does not exist a requirement $R_j\in \mathcal{R}$ such that $\mathit{event}(R_j) = \sigma$, then $\sigma$ is enabled.
    \item If there does exist a requirement $R_j\in \mathcal{R}$ such that $\mathit{event}(R_j) = \sigma$, then $R_j$ is also the only requirement in $\mathcal{R}$ such that $\mathit{event}(R_j) = \sigma$ (Property 3.c). As the condition $C_j = \mathit{cond}(R_j)$ only depends on plant components from $P_{\mathit{sm}}$ and not plant components from $P_r$ or $P_o$ (Property 3.g), it follows from Lemma~\ref{thm:indsupervisorisplantreq} that there exists a string in $P_{\mathit{sm}}$ such that the reached state $r$ satisfies $C_j$. No transition in plant components from $P_r$ and $P_o$ are needed as all states from these plant components are irrelevant in satisfying the condition $C_j$. Therefore, there exists a path in $P$ such that $\sigma$ is enabled.
\end{itemize}
From the above observation, \hlrr{it can be concluded} that always a string (including the empty string) such that $\sigma$ is enabled \hlrr{can be found}. As $\sigma$ is chosen arbitrarily along the path in $P_r$ labeled with $s_r$, it follows that $\delta(q_r,s_r)\in Q_{m,r}$. Finally, combining this with the fact that $q_r$ is chosen arbitrarily and that $P_o\parallel P_{\mathit{sm}}\parallel \mathcal{R}$ is trim, it follows that $P_o\parallel P_{\mathit{sm}}\parallel P_r\parallel \mathcal{R}$ is nonblocking.
\end{proof}

Now Theorem~\ref{thm:nosynthesis} \hlrr{is proven}.

\begin{proof}[Proof of Theorem~\ref{thm:nosynthesis}]
From Lemmas~\ref{thm:indsupervisorisplantreq} and~\ref{thm:modulararenonconflicting} it follows that a set of supervisors $\mathcal{S}=\{S_1,\ldots,S_n\}$ \hlrr{can be constructed} such that $S_j = \sup\mathit{CN}(P,R_j) = P\parallel R_j$ and $\mathcal{S}$ is nonconflicting. The antecedent follows directly from combining these last two facts.
\end{proof}

\section{\hll{Proof of Theorem~\ref{thm:nosynthesisrelaxed}}}\label{app:proofnosynthesisrelaxed}
\begin{figure}
	\begin{center}
		\begin{tikzpicture}[every node/.style={font=\sffamily\small, color=black}, graph node/.style={circle,draw=lbl,font=\small,fill=lbl, inner sep=0pt, minimum size=3pt}, node distance=1cm, 
			->-/.style={thin, color=lbl, decoration={
					markings,
					mark=at position 0.6 with {\arrow{stealth}}}, postaction={decorate}}
			]
			\node[graph node, label={above:$P_1$}] (v1) {};
			\node (v) [below = of v1] {};
			\node[graph node, label={left:$P_2$}]  (v2) [left  = of v.center] {};
			\node[graph node, label={right:$P_3$}] (v3) [right = of v.center] {};
			\node[graph node, label={right:$P_5$}] (v4) [below = of v] {};
			\node[graph node, label={right:$P_5'$}] (v4t) [right = of v4.center] {};
			\node[graph node, label={left:$P_4$}] (v5) [left  = of v4.center] {};
			
			\draw[->-] (v1) -- node [left]  {$e_1$} (v2);
			\draw[->-] (v1) -- node [right] {$e_2$} (v3);
			\draw[->-] (v1) -- node [right] {$e_3$} (v4);
			\draw[->-] (v3) -- node [right] {$e_5$} (v4t);
			\draw[->-] (v2) -- node [left]  {$e_4$} (v5);
		\end{tikzpicture}
	\end{center}
	\caption{The forest of dependency graph $G_{cp}$ from Figure~\ref{fig:exampleGraph2}.}
	\label{fig:exampleForest}
\end{figure}
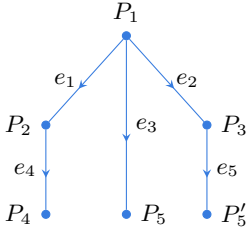

Before Theorem~\ref{thm:nosynthesisrelaxed} \hlrr{is proven}, the following lemma is introduced which transforms an acyclic dependency graph into a forest of trees. A \emph{tree} is an acyclic directed graph where each vertex has at most one incoming edge, i.e., for each vertex $v$ there is at most one edge $e$ such that $\mbox{ter}(e)=v$. A forest is a set of trees. A forest can be constructed from an acyclic directed graph recursively. Assume that a subgraph $T$ having vertex $v$ as root node is already a tree. Then for each incoming edge into $v$ subgraph $T$ is duplicated and set to the terminating vertex of that edge. Figure~\ref{fig:exampleForest} shows the forest with a single tree of the dependency graph as shown in Figure~\ref{fig:exampleGraph2}. As vertex $P_5$ has two incoming edges, the directed graph $G_{cp}$ is not a tree. By duplicating vertex $P_5$, the tree in Figure~\ref{fig:exampleForest} is constructed.

From a dependency (sub)graph, the control problem it represents can be reconstructed as follows. The control problem $(\mathcal{P}, \mathcal{R}')$ represented by a dependency graph $(\mathcal{P}, E)$ is the one where $\mathcal{R}' = \{R\in \mathcal{R}\ |\ \exists e \in E \mbox{ s.t. } \mathit{event}(R) \in \Sigma_{\mbox{init}(e)} \wedge\mbox{ ter}(e) \in \mathit{cond}(R) \}$.

\begin{lemma}\label{lem:forest}
Let $G_{\mathit{CP}} = (\mathcal{P},E)$ be an acyclic dependency graph of control problem $\mathit{CP} = (\mathcal{P},\mathcal{R})$ satisfying \textbf{RCNMS}, and let $F$ be the forest constructed from $G_{\mathit{CP}}$. Then $S$ is a controllable, nonblocking, and maximally permissive supervisor of $\mathit{CP}$ if and only if $S$ is a controllable, nonblocking, and maximally permissive supervisor of the control problem \hlr{represented by $F$}.
\end{lemma}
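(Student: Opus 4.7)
The approach rests on the observation that when the forest construction duplicates a vertex $P$ into copies $P$ and $P'$, both copies carry the same alphabet $\Sigma_P$, so under synchronous composition every event of $P$ must synchronise between them, and starting from the common initial state they evolve in lockstep. Consequently, the monolithic plant of the forest problem is language-equivalent (in fact bisimilar) to $\parallel \mathcal{P}$, the extra state coordinates contributed by duplicates being redundant in every reachable state. The plan is to lift this equivalence to the closed-loop system and then invoke the purely language-based definitions of controllability, nonblockingness and maximal permissiveness.

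First I would fix notation: let $\mathcal{P}_F$ denote the multiset of plant models appearing in $F$ (one per vertex), and $\mathcal{R}_F$ the requirements recovered from $F$ via the reconstruction rule stated just before the lemma, with each literal $P.q$ in a condition relabelled to refer to the unique copy of $P$ attached to the relevant edge. Let $\pi$ be the map that merges, in any composed state of $\parallel \mathcal{P}_F$, all copies of each original plant into a single coordinate. By induction on the length of strings in $\mathcal{L}(\parallel \mathcal{P}_F)$, using Definition~\ref{def:synchronouscomposition} together with the fact that any two copies of the same original plant share their entire alphabet, I would prove that every reachable state of $\parallel \mathcal{P}_F$ has all copies of the same original plant in identical local states. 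Hence $\pi$ is well-defined and restricts to a bijection on the reachable state sets, from which $\mathcal{L}(\parallel \mathcal{P}_F) = \mathcal{L}(\parallel \mathcal{P})$ and $\mathcal{L}_m(\parallel \mathcal{P}_F) = \mathcal{L}_m(\parallel \mathcal{P})$ follow directly.

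To handle the requirements, for each $R = \mu \textbf{ needs } C$ in $\mathcal{R}$ with counterpart $R_F = \mu \textbf{ needs } C_F$ in $\mathcal{R}_F$ (differing only in which copy of each referenced plant is named), the lockstep property implies that $C_F$ evaluated at a reachable state $x$ of $\parallel \mathcal{P}_F$ yields the same truth value as $C$ evaluated at $\pi(x)$. Invoking Definition~\ref{def:syncompreq}, the transition function of $\parallel \mathcal{P}_F \parallel \mathcal{R}_F$ then enables precisely the same events as that of $\parallel \mathcal{P} \parallel \mathcal{R}$ under $\pi$, yielding equality of the generated and marked closed-loop languages. Since controllability, nonblockingness and maximal permissiveness of a supervisor $S$ depend only on $\mathcal{L}(\parallel \mathcal{P} \parallel S)$ and $\mathcal{L}_m(\parallel \mathcal{P} \parallel S)$, this closed-loop language equality gives that $S$ satisfies these three properties for $\mathit{CP}$ if and only if it satisfies them for the control problem represented by $F$.

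The main obstacle I anticipate is formal bookkeeping rather than deep argument: making the duplication procedure and the renaming $P.q \mapsto P'.q$ inside conditions completely precise, properly handling the multiset/set distinction for $\mathcal{P}_F$ when the same original automaton appears several times, and verifying that the forest construction preserves the edge-to-requirement correspondence so that no $R \in \mathcal{R}$ is accidentally dropped or double-counted when passing from $G_{\mathit{CP}}$ to $F$. Once the lockstep invariant of duplicates is in place, the rest reduces to standard language-equivalence reasoning.
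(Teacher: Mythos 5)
Your proposal is correct and takes essentially the same approach as the paper: the paper's proof likewise rests on the observation that duplicated copies share their full alphabet, so duplication is behaviorally idempotent under synchronous composition (the paper states this as "$S$ is a controllable, nonblocking, and maximally permissive supervisor for $(\mathcal{P}'\parallel \mathcal{P}') \parallel (\mathcal{R}'\parallel \mathcal{R}')$ if and only if it is one for $\mathcal{P}' \parallel \mathcal{R}'$" and appeals to the recursive construction of $F$). Your lockstep invariant, projection map $\pi$, and condition-evaluation argument are simply a more explicit formalization of exactly that assertion.
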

\begin{proof}
In the construction of the forest $F$ from $G_{\mathit{CP}}$, subgraphs are duplicated. Duplicating plants and requirements results in the same controllable, nonblocking, and maximally permissive supervisor, i.e., $S$ is a controllable, nonblocking, and maximally permissive supervisor for $(\mathcal{P}'\parallel \mathcal{P}') \parallel (\mathcal{R}'\parallel \mathcal{R}')$ if and only if $S$ is a controllable, nonblocking, and maximally permissive supervisor for $\mathcal{P}' \parallel \mathcal{R}'$, where $\mathcal{P}'\subseteq \mathcal{P}$ and $\mathcal{R}'\subseteq \mathcal{R}$ are sets of plant models and requirement models, respectively. As forest $F$ is constructed recursively in this manner, the result holds for the complete forest.
\end{proof}

\hll{The proof of Theorem~\ref{thm:nosynthesisrelaxed} follows now.}
\begin{proof}[Proof of Theorem~\ref{thm:nosynthesisrelaxed}]
For $\mathit{CP} = (\mathcal{P}, \mathcal{R})$, $\mathit{CP}' = (\mathcal{P}', \mathcal{R}')$ is a partial control problem of $\mathit{CP}$, denoted by $\mathit{CP}'\preceq \mathit{CP}$, if $\mathcal{P}'\subseteq \mathcal{P}$ and $\mathcal{R}'\subseteq \mathcal{R}$. From Lemma~\ref{lem:forest} it follows that the forest $F$ constructed from $G_{\mathit{cp}}$ can be analyzed instead of $G_{\mathit{cp}}$ directly. Therefore, \hlrr{it is shown next} that no synthesis is needed if (each tree in) the forest is acyclic by induction on the depth of each tree in forest $F$.

\textbf{Base case} Let subgraph $(\mathcal{P}', \emptyset)\subseteq F$ with $\mathcal{P}'\subseteq \mathcal{P}$ be a tree of depth zero, i.e., it only contains leaf nodes. Then the partial control problem $(\mathcal{P}', \emptyset)$ represented by this subgraph is trivially controllable and nonblocking, and $\mathcal{P}'$ is strongly connected.

\textbf{Induction hypothesis} Assume the set of trees $\{T_1,\ldots,T_k\}$ each with depth at most $n$ such that for each tree $(\mathcal{P}^i,E^i), i\in [1,k]$ the partial control problem $(\mathcal{P}^i, \mathcal{R}^i)$ represented by this subgraph is controllable and nonblocking, and $\mathcal{P}^i\parallel \mathcal{R}^i$ is strongly connected.

\textbf{Inductive step} Denote $\mathcal{P}' = \mathcal{P}^1\cup\ldots\cup \mathcal{P}^k$ the set of all vertices and $E'= E^1\cup\ldots\cup E^k$ the set of all edges of the trees with depth at most $n$, and the control problem $(\mathcal{P}',\mathcal{R}')$ represented by subgraph $(\mathcal{P}',E')$. Let $P \in \mathcal{P} \setminus \mathcal{P}'$ be a vertex not yet in any tree of depth at most $n$ such that for all edges $e\in E$ with $\hlr{\mbox{init}(e) = P}$, \hlr{which} \hlrr{is assigned} to $E_P$, it holds that $\mbox{ter}(e)\in \mathcal{P}'$. \hlr{With other words, $P$ is selected if each of its outgoing edges enter a tree with depth at most $n$. Note that each edge in $E_p$ has a different terminal vertex, because $F$ is a forest.} Let $\mathcal{R}=\{R\in \mathcal{R}\ |\ \mathit{event}(R) \in \Sigma_P\}$ contain all requirements restricting the behavior of $P$. \hlrr{It is shown below} that the partial control problem represented by tree $(\mathcal{P}'\cup \{P\},E'\cup E_P)$ of depth at most $n+1$ is controllable and nonblocking, and strongly connected.

From the induction hypothesis it follows that the control problem represented by subgraph $(\mathcal{P}',E')$ is strongly connected, i.e., $\mathcal{P}'\parallel \mathcal{R}'$ is strongly connected. Therefore, similarly to Lemma 3 of~\cite{goorden_no_2019}, for each requirement $R\in \mathcal{R}$ there exists a string such that state $r$ of $\mathcal{P}'$ is reached that satisfies the condition $\mathit{cond}(R)$, thus enabling controllable event $\mathit{event}(R)$. Analogously to the proof of Lemma 4 of~\cite{goorden_no_2019}, it holds that a string can be constructed in $\mathcal{P}'$ such that for each path in plant $P$ all controllable events are enabled. Therefore, the partial control problem $(\mathcal{P}'\cup\{P\},\mathcal{R}'\cup\mathcal{R})$ represented by subgraph $(\mathcal{P}'\cup \{P\},E'\cup E_P)$ of depth at most $n+1$ is controllable and nonblocking, and for each $P_i\in \mathcal{P}'\cup\{P\}$ all states can be reached from each state. This concludes the inductive step.
\end{proof}
\section{\hll{Proof of Theorem~\ref{thm:cyclicgraphs}}}\label{app:proofcyclicgraphs}
\begin{proof}[Proof of Theorem~\ref{thm:cyclicgraphs}]
First, let $\mathcal{P}'$ be the set of all vertices \emph{not} contained in $\mathbb{W}$, i.e., $\mathcal{P}' = \mathcal{P} \setminus (\bigcup_{W\in\mathbb{W}} \bigcup_{V\in W} V)$. From the definition of $V$ it follows that for each vertex $P\in \mathcal{P}'$ there does not exist a path to a cycle. Therefore, the subgraph $(\mathcal{P}',E')$ with $E' = \{e\in E\ |\ \mbox{init}(e)\in \mathcal{P}'\}$ is acyclic. From Theorem~\ref{thm:nosynthesisrelaxed} it directly follows that control problem $(\mathcal{P}', \mathcal{R}')$ represented by this acyclic graph is already controllable and nonblocking, i.e., synthesis can be skipped for this part. Furthermore, from the proof of that theorem it follows that $\mathcal{P}'\parallel \mathcal{R}'$ is also strongly connected.

Now, consider $W\in\mathbb{W}$. The simplified partial control problem $(P_W,\tilde{R}_W)$ represented by $P_W=\bigcup_{V\in W} V$ may contain requirements where the condition is simplified by replacing some state references $P.q$ by $\textbf{T}$. As $\mathbb{W}$ is the quotient set of $\mathbb{V}$ by $\sim$, it follows from the definition of $\sim$ that each plant $P$ of those replaced state references is from $\mathcal{P}'$. As \hlrr{it is} already shown in the previous paragraph that $\mathcal{P}'\parallel \mathcal{R}'$ is strongly connected, it is always possible to reach state $P.q$, which justifies the replacement of this state reference by $\textbf{T}$. Therefore, if $S_W$ is a controllable, nonblocking, and maximally permissive supervisor of the simplified partial control problem $(P_W,\tilde{R}_W)$, then $\mathcal{P}'\parallel \mathcal{R}'\parallel S_W$ is a controllable, nonblocking, and maximally permissive supervisor for the partial control problem $(\mathcal{P}'\cup P_W, \mathcal{R}'\cup R_W)$, with $R_W$ the non-simplified requirements of $\tilde{R}_W$.

From the definition of $\mathbb{W}=\mathbb{V}/\sim$, it follows that no vertex is shared between two distinct $W_1,W_2\in\mathbb{W}, W_1\neq W_2$, i.e., $(\bigcup_{V\in W_1} V) \cap (\bigcup_{V\in W_2} V) = \emptyset$. Let $S_{W_1}$ and $S_{W_2}$ be the controllable, nonblocking, and maximally permissive supervisors for the simplified control problems represented by $W_1$ and $W_2$, respectively. Then the supervisors do not share events and it holds trivially that $S_{W_1}\parallel S_{W_2}$ is a controllable, nonblocking, and maximally permissive supervisor.

Finally, combining the above observations for each $W\in\mathbb{W}$ it follows that $\mathcal{P}'\parallel \mathcal{R}'\parallel (\parallel_{W\in\mathbb{W}}S_W) = \mathcal{P}\parallel \mathcal{R}\parallel (\parallel_{W\in\mathbb{W}}S_W)$ is a controllable, nonblocking, and maximally permissive supervisor.
\end{proof}
\end{document}